\documentclass{vldb}
\usepackage{balance,array,microtype,pifont}
\usepackage[bf]{subfigure}

\usepackage{color, times, mathptmx, amsmath, amssymb, amsthm, verbatim, 
  mdwlist, graphicx, footnote, multirow, xspace, tikz, colortbl,dcolumn, mathtools,zi4}

\usepackage[noadjust]{cite}

\DeclareMathAlphabet{\mathcal}{OMS}{cmsy}{m}{n}
\let\footnotesize\small

\usepackage{algorithm}% http://ctan.org/pkg/algorithms

\usepackage[noend]{algpseudocode}% http://ctan.org/pkg/algorithmicx

% Franklin, Alvaro, Conway, Venkataraman, Panda

\setlength{\parskip}{3ex plus 2ex minus 2ex}

\algrenewcommand{\alglinenumber}[1]{\small#1:}

\theoremstyle{definition}
\setlength{\textfloatsep}{1em}% Remove \textfloatsep

\newtheorem{lemma}{Lemma}

\makeatletter
\def\thm@space@setup{\thm@preskip=.5em
\thm@postskip=.5em}
\makeatother

\newtheorem{definition}{Definition}
\newtheorem{theorem}{Theorem}

\theoremstyle{remark}

\usepackage{thmtools}
\declaretheoremstyle[%
  spaceabove=10pt,%
  spacebelow=10pt,%
  postheadspace=1em,%
  qed=\qedsymbol%
]{mystyle} 
\declaretheorem[name={Claim},style=mystyle]{claim}

\interfootnotelinepenalty=10000

\usepackage{scrextend}
\usepackage[hyphens]{url}

\usepackage[hidelinks, pdftitle={Coordination Avoidance in Database Systems},
            pdfauthor={Peter Bailis, Alan Fekete, Michael J. Franklin, Ali Ghodsi, Joseph M. Hellerstein, Ion Stoica}]{hyperref}
\usepackage{breakurl}

\newcommand{\suchthat}{\textrm{ s.t. }}
\newcommand{\mathand}{\textrm{ and }}

\newcommand{\minihead}[1]{{\vspace{.45em}\noindent\textbf{#1.} }}
\newcommand{\miniheadit}[1]{{\vspace{.45em}\noindent\textit{#1.} }}

\newif\ifextended
%\extendedfalse
\extendedtrue

\newcommand{\appexperiment}{A}
\newcommand{\apptheory}{B}
\newcommand{\appapply}{C}

\newcommand{\rappendix}[1]{{\ifextended Appendix #1\else\cite[Appendix
    #1]{ca-extended}\fi }}

\newcommand{\rsection}[1]{{\ifextended Section #1\else\cite[Section #1]{ca-extended}\fi }}

\newcommand{\ttf}[1]{\texttt{#1}\xspace}

\newcommand{\cfree}{coordination-free\xspace}
\newcommand{\cfreedom}{coordination-freedom\xspace}

\newcommand{\fullnameconfluence}{invariant confluence\xspace}
\newcommand{\iconfluent}{$\mathcal{I}$-confluent\xspace}
\newcommand{\iconfluence}{$\mathcal{I}$-confluence\xspace}

\newcommand{\dpc}{\ttf{D-2PC}}
\newcommand{\cpc}{\ttf{C-2PC}}

\hyphenation{da-ta-base}
\hyphenation{pre-ven-tion-}
\hyphenation{non-de-ter-min-is-tic}

\usepackage{etoolbox}
\newcommand{\zerodisplayskips}{%
  \setlength{\abovedisplayskip}{4pt}
  \setlength{\belowdisplayskip}{4pt}
  \setlength{\abovedisplayshortskip}{4pt}
  \setlength{\belowdisplayshortskip}{4pt}}
\appto{\normalsize}{\zerodisplayskips}
\appto{\small}{\zerodisplayskips}
\appto{\footnotesize}{\zerodisplayskips}

\usepackage[T1]{fontenc}

\newenvironment{introenumerate}
{

   \vspace{-.5em}
   \newcounter{qdcounter}
    \begin{list}{\arabic{qdcounter}.~}{\usecounter{qdcounter}\leftmargin=1em}
        \setlength{\topsep}{0em}
        \setlength{\parskip}{0pt}
        \setlength{\partopsep}{0pt}
        \setlength{\parsep}{0pt}         
        \setlength{\itemsep}{.5em} 
        \setlength{\itemindent}{0em}
}
{
    \end{list} 
    \vspace{-.5em}
}

\usepackage[leftmargin=0em]{quoting}

\begin{document}
%
% --- Author Metadata here ---
\conferenceinfo{XXX}{YYY}
%\CopyrightYear{2007} % Allows default copyright year (20XX) to be over-ridden - IF NEED BE.
%\crdata{0-12345-67-8/90/01}  % Allows default copyright data (0-89791-88-6/97/05) to be over-ridden - IF NEED BE.
% --- End of Author Metadata ---

%\title{When Does Consistency Require
%Coordination?\ifextended\\(Extended Version)\fi}
\title{Coordination Avoidance in Database Systems\ifextended\\(Extended Version)\fi}

{\author{Peter Bailis, Alan Fekete{\fontsize{12}{14}$^\dagger$},
    Michael J. Franklin, Ali Ghodsi, Joseph M. Hellerstein, Ion Stoica
    \\{\affaddr{UC Berkeley and {\fontsize{12}{14}$^\dagger$}University of Sydney}}}}
\maketitle

\begin{abstract}
  Minimizing coordination, or blocking communication between
  concurrently executing operations, is key to maximizing scalability,
  availability, and high performance in database systems. However,
  uninhibited coordination-free execution can compromise application
  correctness, or consistency. When is coordination necessary for
  correctness? The classic use of serializable transactions is
  sufficient to maintain correctness but is not necessary for all
  applications, sacrificing potential scalability. In this paper, we
  develop a formal framework, \fullnameconfluence, that determines whether an
  application requires coordination for correct execution. By
  operating on application-level invariants over database states
  (e.g., integrity constraints), \fullnameconfluence analysis provides a
  necessary and sufficient condition for safe, coordination-free
  execution. When programmers specify their application invariants,
  this analysis allows databases to coordinate only when anomalies
  that might violate invariants are possible. We analyze the
  \fullnameconfluence of common invariants and operations from real-world
  database systems (i.e., integrity constraints) and applications and
  show that many are invariant confluent and therefore achievable without
  coordination. We apply these results to a proof-of-concept
  coordination-avoiding database prototype and demonstrate sizable
  performance gains compared to serializable execution, notably a
  25-fold improvement over prior TPC-C New-Order performance on a 200
  server cluster.
\end{abstract}

\section{Introduction}
\label{sec:intro}

% coordination-freedom provides scalability

Minimizing coordination is key in high-performance, scalable database
design. Coordination---informally, the requirement that concurrently
executing operations synchronously communicate or otherwise stall in
order to complete---is expensive: it limits concurrency between
operations and undermines the effectiveness of scale-out across
servers. In the presence of partial system failures, coordinating
operations may be forced to stall indefinitely, and, in the
failure-free case, communication delays can increase
latency~\cite{hat-vldb,gilbert-cap}. In contrast, coordination-free
operations allow aggressive scale-out,
availability~\cite{gilbert-cap}, and low latency
execution~\cite{pacelc}. If operations are coordination-free, then
adding more capacity (e.g., servers, processors) will result in additional
throughput; operations can execute on the new resources without
affecting the old set of resources. Partial failures will not affect
non-failed operations, and latency between any database replicas can
be hidden from end-users.

% serializability is traditional answer to correctness, but requires
% coordination

Unfortunately, coordination-free execution is not always safe. Uninhibited
coordination-free execution can compromise application-level
correctness, or consistency.\footnote{Our use of the term
  ``consistency'' in this paper refers to \textit{application-level} correctness, as
  is traditional in the database
  literature~\cite{gray-virtues,bernstein-book,eswaran-consistency,traiger-tods,davidson-survey}. As
  we discuss in Section~\ref{sec:bcc-practice}, replicated data
  consistency (and isolation~\cite{adya-isolation,hat-vldb})
  models like linearizability~\cite{gilbert-cap} can be cast as
  application criteria if desired.} In canonical banking application
examples, concurrent, coordination-free withdrawal operations can
result in undesirable and ``inconsistent'' outcomes like negative account
balances---application-level anomalies that the database should
prevent. To ensure correct behavior, a database system must coordinate
the execution of these operations that, if otherwise executed
concurrently, could result in inconsistent application state.

This tension between coordination and correctness is
evidenced by the range of database concurrency control policies. In
traditional database systems, serializable isolation provides
concurrent operations (transactions) with the illusion of executing in
some serial order~\cite{bernstein-book}. As long as individual
transactions maintain correct application state, serializability
guarantees correctness~\cite{gray-virtues}. However, each pair of
concurrent operations (at least one of which is a write) can
potentially compromise serializability and therefore will require
coordination to execute~\cite{hat-vldb,davidson-survey}. By isolating
users at the level of reads and writes, serializability can be overly
conservative and may in turn coordinate more than is strictly
necessary for
consistency~\cite{lamport-audit,tamer-book,ic-survey,weihl-thesis}.
For example, hundreds of users can safely and simultaneously retweet
Barack Obama on Twitter without observing a serial ordering of updates
to the retweet counter. In contrast, a range of widely-deployed weaker
models require less coordination to execute but surface read and write
behavior that may in turn compromise
consistency~\cite{dynamo,optimistic,adya-isolation,hat-vldb}. With
these alternative models, it is up to users to decide when weakened
guarantees are acceptable for their
applications~\cite{consistency-borders}, leading to confusion
regarding (and substantial interest in) the relationship between
consistency, scalability, and
availability~\cite{hat-vldb,pacelc,dynamo,gilbert-cap,davidson-survey,kohler-commutativity,redblue-new,queue}.

% which anomalies matter depends on application; think about
% invariants instead, use to identify necessary and sufficient
% condition

In this paper, we address the central question inherent in this
trade-off: when is coordination strictly necessary to maintain
application-level consistency? To do so, we enlist the aid of
application programmers to specify their correctness criteria in the
form of \textit{invariants}. For example, our banking application
writer would specify that account balances should be positive (e.g.,
by schema annotations), similar to constraints in modern databases
today. Using these invariants, we formalize a \textit{necessary} and
sufficient condition for invariant-preserving and coordination-free
execution of an application's operations---the first such condition we
have encountered. This property---\fullnameconfluence
(\iconfluence)---captures the potential scalability and availability
of an application, independent of any particular database
implementation: if an application's operations are \iconfluent, a
database can correctly execute them without coordination. If
operations are not \iconfluent, coordination is required to
guarantee correctness. This provides a basis for \textit{coordination
  avoidance}: the use of coordination only when necessary.

While coordination-free execution is powerful, are any \textit{useful}
operations safely executable without coordination? \iconfluence
analysis determines when concurrent execution of specific operations
can be ``merged'' into valid database state; we accordingly analyze
invariants and operations from several real-world databases and
applications. Many production databases today already support
invariants in the form of primary key, uniqueness, foreign key, and
row-level check constraints~\cite{kemme-si-ic,hat-vldb}. We analyze
these and show many are \iconfluent, including forms of foreign key
constraints, unique value generation, and check constraints, while
others, like primary key constraints are, in general, not. We also
consider entire \textit{applications} and apply our analysis to the
workloads of the OLTPBenchmark suite~\cite{oltpbench}. Many of the 
operations and invariants are \iconfluent. As an extended case study, we examine the
TPC-C benchmark~\cite{tpcc}, the preferred standard for evaluating new
concurrency control
algorithms~\cite{abadi-vll,jones-dtxn,calvin,hstore,oltpbench}. We
show that ten of twelve of TPC-C's invariants are \iconfluent under
the workload transactions and, more importantly, compliant TPC-C can
be implemented without any synchronous coordination across servers. We
subsequently scale a coordination-avoiding database prototype
linearly, to over 12.7M TPC-C New-Order transactions per second on
$200$ servers, a 25-fold improvement over prior results.

% many workloads are amenable to cfree execution! the following ICs
% are actually okay. but if not, here's the cost.

Overall, \iconfluence offers a concrete grasp on the challenge of
minimizing coordination while ensuring application-level correctness. In seeking a necessary and sufficient (i.e., ``tight'')
condition for safe, coordination-free execution, we require the
programmer to specify her correctness criteria. If either these criteria or
application operations are unavailable for inspection, users must fall
back to using serializable transactions or, alternatively, perform the
same ad-hoc analyses they use today~\cite{queue}. Moreover, it is already well
known that coordination is required to prevent several read/write
isolation anomalies like non-linearizable
operations~\cite{gilbert-cap,hat-vldb}. However, when users
\textit{can} correctly specify their application correctness criteria
and operations, they can maximize scalability without requiring
expertise in the milieu of weak read/write isolation
models~\cite{hat-vldb,adya-isolation}. We have also found that
\iconfluence to be a useful design tool: studying specific
combinations of invariants and operations can indicate the existence
of more scalable algorithms~\cite{kohler-commutativity}.

In summary, this paper offers the following high-level takeaways:
\begin{introenumerate}

% new item: Serializable transactions are sufficient for maintaining
% application-level correctness criteria but always coordination
% during conflicting read/write operations.

% ACID transactions always satisfy correctness, but they do so at the
% cost of always coordinating. By leveraging semantic knowledge about
% application invariants, coordination can often be avoided, leading
% to low-latency, high-availability, without sacrificing correctness.  

\item Serializable transactions preserve application correctness at
  the cost of always coordinating between conflicting reads and writes.

\item Given knowledge of application transactions and correctness
  criteria (e.g., invariants), it is often possible to avoid this
  coordination (by executing some transactions without coordination,
  thus providing availability, low latency, and excellent scalability)
  while still preserving those correctness criteria.

\item Invariant confluence offers a necessary and sufficient
  condition for this correctness-preserving, coordination-free execution.

\item Many common integrity constraints found in SQL and standardized
  benchmarks are invariant confluent, allowing order-of-magnitude
  performance gains over coordinated execution.

\end{introenumerate}
While coordination cannot always be avoided, this work evidences the
power of application invariants in scalable and correct
execution of modern applications on modern hardware. Application
correctness does 
not always require coordination, and \iconfluence analysis can explain 
both when and why this is the case.

\ifextended
\minihead{Overview} The remainder of this paper proceeds as follows:
Section~\ref{sec:motivation} describes and quantifies the costs of
coordination. Section~\ref{sec:model} introduces our system model and
Section~\ref{sec:bcc-theory} contains our primary theoretical
result. Readers may skip to Section~\ref{sec:bcc-practice} for
practical applications of \iconfluence to real-world
invariant-operation combinations. Section~\ref{sec:evaluation}
subsequently applies
these combinations to real applications and presents an experimental
case study of TPC-C. Section~\ref{sec:relatedwork} describes related
work, while Section~\ref{sec:futurework} discusses possible extensions
and Section~\ref{sec:conclusion} concludes.
\else
\minihead{Overview} The remainder of this paper proceeds as follows:
Section~\ref{sec:motivation} describes and quantifies the costs of
coordination. Section~\ref{sec:model} introduces our system model and
Section~\ref{sec:bcc-theory} contains our primary theoretical
result. Readers may skip to Section~\ref{sec:bcc-practice} for
practical applications of \iconfluence to real-world
invariant-operation combinations. Section~\ref{sec:evaluation}
subsequently applies
these combinations to real applications and presents an experimental
case study of TPC-C. Section~\ref{sec:relatedwork} describes related
work, and Section~\ref{sec:conclusion} concludes.
\fi

\section{Conflicts and Coordination}
\label{sec:motivation}

As repositories for application state, databases are traditionally
tasked with maintaining correct data on behalf of users. During
concurrent access to data, a database ensuring correctness must
therefore decide which user operations can execute simultaneously and
which, if any, must coordinate, or block. In this section, we explore
the relationship between the correctness criteria that a database attempts
to maintain and the coordination costs of doing so.

\minihead{By example} As a running example, we consider a
database-backed payroll application that maintains information about
employees and departments within a small business. In the application,
$a.)$ each employee is assigned a unique ID number and $b.)$ each
employee belongs to exactly one department. A database ensuring correctness must
maintain these application-level properties, or \textit{invariants} on
behalf of the application (i.e., without application-level
intervention). In our payroll application, this is non-trivial: for
example, if the application attempts to simultaneously create two
employees, then the database must ensure the employees are
assigned distinct IDs.

\minihead{Serializability and conflicts} The classic answer to
maintaining application-level invariants is to use serializable
isolation: execute each user's ordered sequence of operations, or
\textit{transactions}, such that the end result is equivalent to some
sequential execution~\cite{tamer-book,bernstein-book,gray-virtues}. If
each transaction preserves correctness in isolation, composition via
serializable execution ensures correctness. In our payroll example,
the database would execute the two employee creation transactions such
that one transaction appears to execute after the other, avoiding 
duplicate ID assignment.

While serializability is a powerful abstraction, it comes with a cost:
for arbitrary transactions (and for all implementations of
serializability's more conservative variant---conflict
serializability), any two operations to the same item---at least one
of which is a write---will result in a \textit{read/write
  conflict}. Under serializability, these conflicts require
coordination or, informally, blocking communication between concurrent
transactions: to provide a serial ordering, conflicts must be totally
ordered across transactions~\cite{bernstein-book}. For example, given database state
$\{x=\bot, y=\bot\}$, if transaction $T_1$ writes $x=1$ and reads from
$y$ and $T_2$ writes $y=1$ and reads from $x$, a database cannot both
execute $T_1$ and $T_2$ entirely concurrently and maintain
serializability~\cite{davidson-survey,hat-vldb}.

\minihead{The costs of coordination} The coordination overheads above
incur three primary penalties: increased latency (due to stalled
execution), decreased throughput, and, in the event of partial
failures, unavailability. If a transaction takes $d$ seconds to
execute, the maximum throughput of conflicting transactions operating
on the same items under a general-purpose (i.e., interactive,
non-batched) transaction model is limited by $\frac{1}{d}$, while
coordinating operations will also have to wait. On a single system,
delays can be small, permitting tens to hundreds of thousands of
conflicting transactions per item per second. In a partitioned
database system, where different items are located on different
servers, or in a replicated database system, where the same item is
located (and is available for operations) on multiple servers, the
cost increases: delay is lower-bounded by network latency. On a local
area network, delay may vary from several microseconds (e.g., via
Infiniband or RDMA) to several milliseconds on today's cloud
infrastructure, permitting anywhere from a few hundred transactions to
a few hundred thousand transactions per second. On a wide-area
network, delay is lower-bounded by the speed of light (worst-case on
Earth, around $75$ms, or about 13 operations per
second~\cite{hat-vldb}). Under network
partitions~\cite{queue-partitions}, as delay tends towards infinity,
these penalties lead to unavailability~\cite{gilbert-cap,hat-vldb}. In
contrast, operations executing without coordination can proceed
concurrently and will not incur these penalties.

\minihead{Quantifying coordination overheads} To further understand the
costs of coordination, we performed two sets of measurements---one using
a database prototype and one using traces from
prior studies.

\begin{figure}
\includegraphics[width=\columnwidth]{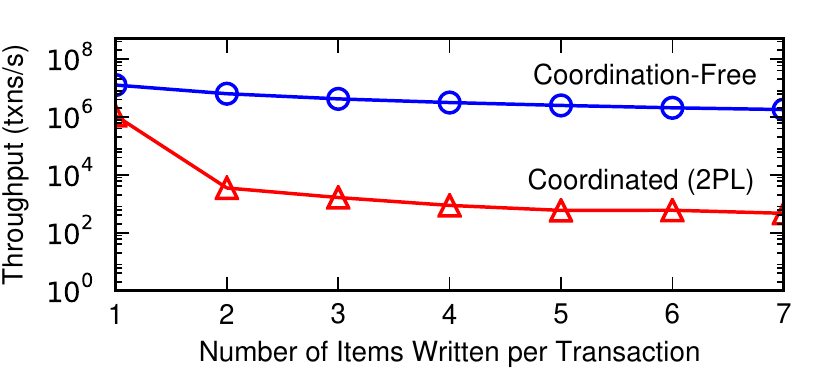}\vspace{-.5em}
\caption{Microbenchmark performance of coordinated and
  coordination-free execution of transactions of varying size writing
  to eight items located on eight separate multi-core servers.}
\label{fig:micro}
\end{figure}

We first compared the throughput of a set of coordinated and
coordination-free transaction execution. We partitioned a set
of eight data items across eight servers and ran one set of
transactions with an optimized variant of two-phase locking (providing
serializability)~\cite{bernstein-book} and ran another set of transactions
without coordination (Figure~\ref{fig:micro}; see
\rappendix{\appexperiment} for more details). With
single-item, non-distributed transactions, the coordination-free
implementation achieves, in aggregate, over 12M transactions per
second and bottlenecks on \textit{physical resources}---namely, CPU
cycles. In contrast, the lock-based implementation achieves
approximately $1.1$M transactions per second: it is unable to fully
utilize all multi-core processor contexts due to lock contention. For
distributed transactions, coordination-free throughput decreases linearly
(as an $N$-item transaction performs $N$ writes), while the throughput
of coordinating transactions drops by over three orders of
magnitude.

While the above microbenchmark demonstrates the costs of a particular
\textit{implementation} of coordination, we also studied the
effect of more fundamental, implementation-independent overheads
(i.e., also applicable to optimistic and scheduling-based concurrency
control mechanisms). We determined the maximum attainable throughput
for coordinated execution within a single datacenter (based on data
from~\cite{bobtail}) and across multiple datacenters (based on data
from~\cite{hat-vldb}) due to blocking coordination during atomic
commitment~\cite{bernstein-book}. For an $N$-server transaction,
classic two-phase commit (\cpc) requires $N$ (parallel) coordinator to
server RTTs, while decentralized two-phase commit (\dpc) requires $N$
(parallel) server to server broadcasts, or $N^2$
messages. Figure~\ref{fig:2pc} shows that, in the local area, with
only two servers (e.g., two replicas or two coordinating operations on
items residing on different servers), throughput is bounded by $1125$
transactions/s (via \dpc; $668$/s via \cpc). Across eight servers,
\dpc throughput drops to $173$ transactions/s (resp. $321$ for \cpc)
due to long-tailed latency distributions. In the wide area, the
effects are more stark: if coordinating from Virginia to Oregon, \dpc
message delays are $83$~ms per commit, allowing $12$ operations per
second. If coordinating between all eight EC2 availability zones,
throughput drops to slightly over $2$ transactions/s in both
algorithms. (\rappendix{\appexperiment} provides more details.)

These results should be unsurprising: coordinating---especially over
the network---can incur serious performance penalties. In contrast,
coordination-free operations can execute without incurring these
costs. The costs of actual workloads can vary: if coordinating
operations are rare, concurrency control will not be a bottleneck. For
example, a serializable database executing transactions with disjoint
read and write sets can perform as well as a non-serializable database
without compromising correctness~\cite{shore-communication}. However,
as these results demonstrate, minimizing the amount of coordination
and its degree of distribution can therefore have a tangible impact on
performance, latency, and
availability~\cite{pacelc,hat-vldb,gilbert-cap}. While we study real
applications in Section~\ref{sec:evaluation}, these measurements
highlight the worst of coordination costs on modern hardware.

\begin{figure}
  \includegraphics[width=\columnwidth]{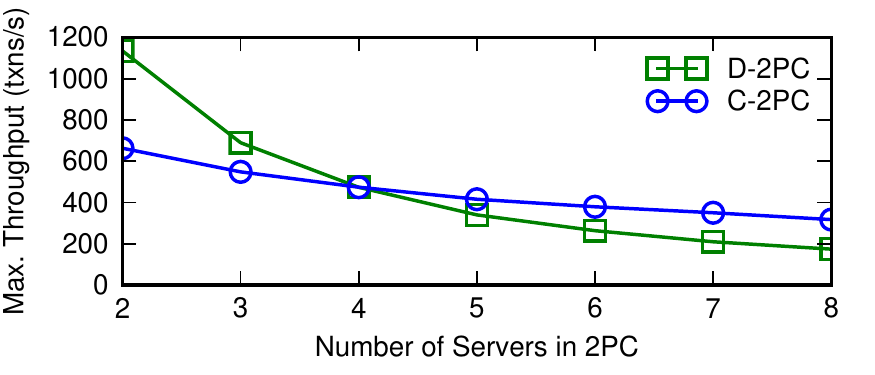}\\
  {\centering \textbf{\scriptsize a.) Maximum transaction
      throughput over local-area network in~\cite{bobtail}}\par}
  \includegraphics[width=\columnwidth]{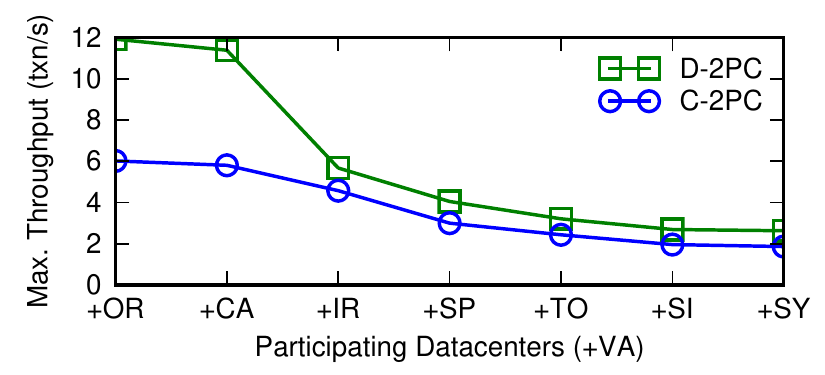}\\
  \textbf{\scriptsize b.) Maximum throughput over wide-area network
    in~\cite{hat-vldb} with transactions originating from a
    coordinator in Virginia (VA; OR:~Oregon, CA:~California,
    IR:~Ireland, SP:~S\~{a}o Paulo, TO:~Tokyo, SI:~Singapore,
    SY:~Sydney)}

\caption{Atomic commitment latency as an upper bound on throughput
  over LAN and WAN networks.}
\label{fig:2pc}
\end{figure}

\minihead{Our goal: Minimize coordination} In this paper, we seek to
minimize the amount of coordination required to correctly execute an
application's transactions. As discussed in Section~\ref{sec:intro},
serializability is \textit{sufficient} to maintain correctness but is
not always \textit{necessary}; that is, many---but not
all---transactions can be executed concurrently without necessarily
compromising application correctness. In the remainder of this paper,
we identify when safe, coordination-free execution is possible. If
serializability requires coordinating between each possible pair of
conflicting reads and writes, we will only coordinate between pairs of
operations that might compromise \textit{application-level}
correctness. To do so, we must both raise the specification of
correctness beyond the level of reads and writes and directly account
for the process of reconciling the effects of concurrent transaction
execution at the application level.

\section{System Model}
\label{sec:model}

To characterize coordination avoidance, we first present a system
model. We begin with an informal overview. In our model, transactions
operate over independent (logical) ``snapshots'' of database
state. Transaction writes are applied at one or more snapshots
initially when the transaction commits and then are integrated into
other snapshots asynchronously via a ``merge'' operator that
incorporates those changes into the snapshot's state. Given a set of
invariants describing valid database states, as
Table~\ref{table:requirements} outlines, we seek to understand when it
is possible to ensure invariants are always satisfied (global
validity) while guaranteeing a response (transactional availability)
and the existence of a common state (convergence), all without
communication during transaction execution
(coordination-freedom). This model need not directly correspond
to a given implementation (e.g., see the database architecture in
Section~\ref{sec:evaluation})---rather, it serves as a useful
abstraction.  The remainder of this section
further defines these concepts; readers more interested in their
application should proceed to Section~\ref{sec:bcc-theory}. We provide
greater detail and additional discussion in~\cite{ca-extended}.

\begin{table}[t]
\begin{center}
\small
\begin{tabular}{|l|r|}
  \hline\textbf{Property} & \textbf{Effect}  \\\hline
  Global validity & Invariants hold over committed states  \\
  Transactional availability & Non-trivial response guaranteed \\
  Convergence & Updates are reflected in shared state \\
  Coordination-freedom & No synchronous coordination\\\hline
\end{tabular}
\end{center}\vspace{-1em}
\caption{Key properties of the system model and their effects.}
\label{table:requirements}
\end{table}

\minihead{Databases} We represent a state of the shared database as a
set $D$ of unique \textit{versions} of data items located on an arbitrary set of
database servers, and each version is located on at least one
server. We use ${\cal D}$ to denote the set of possible database
states---that is, the set of sets of versions. The database is
initially populated by an initial state $D_0$ (typically but not
necessarily empty).

\minihead{Transactions, Replicas, and Merging} Application clients
submit requests to the database in the form of transactions, or
ordered groups of operations on data items that should be executed
together. Each transaction operates on a logical \textit{replica}, or
set of versions of the items mentioned in the transaction. At the
beginning of the transaction, the replica contains a subset of the
database state and is formed from all of the versions of the relevant
items that can be found at one or more physical servers that are
contacted during transaction execution. As the transaction executes,
it may add versions (of items in its writeset) to its replica. Thus,
we define a transaction $T$ as a transformation on a replica: $T:
{\cal D}\rightarrow {\cal D}$. We treat transactions as opaque
transformations that can contain writes (which add new versions to the
replica's set of versions) or reads (which return a specific set of
versions from the replica). (Later, we will discuss transactions
operating on data types such as counters.)

Upon completion, each transaction can \textit{commit}, signaling
success, or \textit{abort}, signaling failure. Upon commit, the
replica state is subsequently \textit{merged} ($\sqcup$:${\cal D}
\times {\cal D} \rightarrow {\cal D}$) into the set of versions at
least one server. We require that the merged effects of a committed
transaction will eventually become visible to other transactions that
later begin execution on the same server.\footnote{This implicitly
  disallows servers from always returning the initial database state
  when they have newer writes on hand. This is a relatively pragmatic
  assumption but also simplifies our later reasoning about admissible
  executions. This assumption could possibly be relaxed by adapting
  Newman's lemma~\cite{obs-confluence}, but we do not consider the
  possibility here.}  Over time, effects propagate to other servers,
again through the use of the merge operator.  Though not strictly
necessary, we assume this merge operator is commutative, associative,
and idempotent~\cite{calm,crdt} and that, for all states $D_i$, $D_0
\sqcup D_i = D_i$.  In our initial model, we define
merge as set union of the versions contained at different servers.
(Section~\ref{sec:bcc-practice} discusses additional implementations.)
For example, if server $R_x = \{v\}$ and $R_y = \{w\}$, then $R_x
\sqcup R_y = \{v, w\}$.

In effect, each transaction can modify its replica state without
modifying any other concurrently executing transactions' replica
state. Replicas therefore provide transactions with partial
``snapshot'' views of global state (that we will use to simulate
concurrent executions, similar to revision
diagrams~\cite{ec-txns}). Importantly, two transactions' replicas do
not necessarily correspond to two physically separate servers; rather,
a replica is simply a partial ``view'' over the global state of the
database system. For now, we assume transactions are known in advance
(see also~\rsection{8}).

\minihead{Invariants} To determine whether a database state is valid
according to application correctness criteria, we use
\textit{invariants}, or predicates over replica state: $I: {\cal D}
\rightarrow \{true, false\}$~\cite{eswaran-consistency}.  In our
payroll example, we could specify an invariant that only one user in a
database has a given ID. This invariant---as well as almost all
invariants we consider---is naturally expressed as a part of the
database schema (e.g., via DDL); however, our approach allows us to
reason about invariants even if they are known to the developer but
not declared to the system. Invariants directly capture the notion of
ACID Consistency~\cite{bernstein-book,gray-virtues}, and we say that a
database state is \textit{valid} under an invariant $I$ (or $I$-valid)
if it satisfies the predicate:

\begin{definition}
A replica state $R \in {\cal D}$ is \textit{$I$-valid} iff $I(R) = true$.
\end{definition}

We require that $D_0$ be valid under
invariants. Section~\ref{sec:theory-discussion} provides additional discussion
regarding our use of invariants.

% Unlike more general forms of axiomatic logic (e.g., Hoare-style triples~\cite{decomp-semantics,isolation-semantics}), we require only one set of invariants per application.

\minihead{Availability} To ensure each transaction receives a
non-trivial response,
we adopt the following definition of
\textit{availability}~\cite{hat-vldb}:

\begin{definition} 
  A system provides \textit{transactionally available} execution iff,
  whenever a client executing a transaction $T$ can access servers
  containing one or more versions of each item in $T$, then $T$
  eventually commits or aborts itself either due to an \textit{abort}
  operation in $T$ or if committing the transaction would violate a
  declared invariant over $T$'s replica state. $T$ will commit in all
  other cases.
\end{definition}

Under the above definition, a transaction can only abort if it
explicitly chooses to abort itself or if committing would violate
invariants over the transaction's replica state.\footnote{This basic
  definition precludes fault tolerance (i.e., durability) guarantees
  beyond a single server failure~\cite{hat-vldb}. We can relax this
  requirement and allow communication with a fixed number of servers
  (e.g., $F+1$ servers for $F$-fault tolerance; $F$ is often
  small~\cite{dynamo}) without affecting our results. This does not
  affect scalability because, as more replicas are added, the
  communication overhead required for durability remains constant.}

\begin{figure}
\begin{center}
\includegraphics[width=.85\columnwidth]{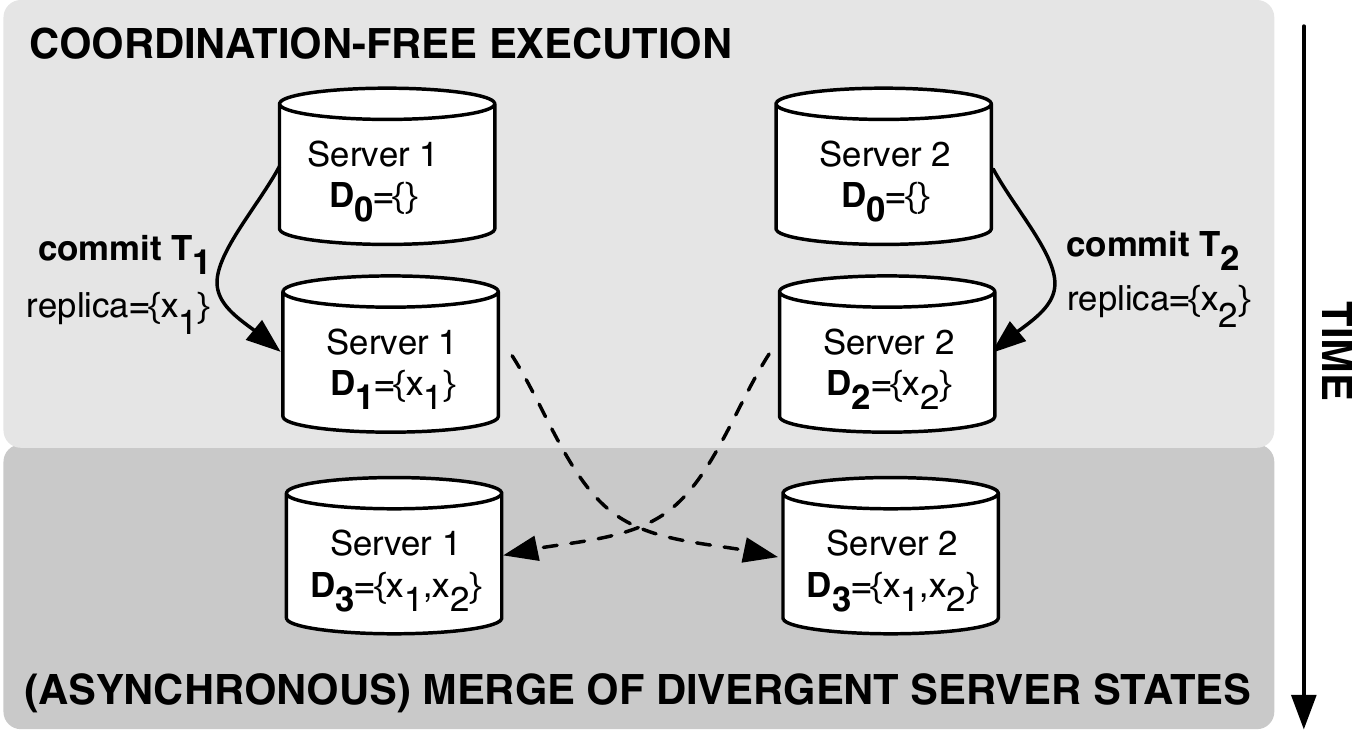}
\end{center}\vspace{-2em}
\caption{An example coordination-free execution of two transactions,
  $T_1$ and $T_2$, on two servers. Each transaction writes to its
  local replica, then, after commit, the servers asynchronously
  exchange state and converge to a common state ($D_3$).}
\label{fig:replicas}
\end{figure}

\minihead{Convergence} Transactional availability allows replicas to
maintain valid state \textit{independently}, but it is vacuously
possible to maintain ``consistent'' database states by letting
replicas diverge (contain different state) forever. This guarantees
\textit{safety} (nothing bad happens) but not \textit{liveness}
(something good happens)~\cite{schneider-concurrent}. To enforce state
sharing, we adopt the following definition:

\begin{definition}
  A system is \textit{convergent} iff, for each pair of servers, in
  the absence of new writes to the servers and in the
  absence of indefinite communication delays between the servers,
  the servers eventually contain the same versions for any item
  they both store.
\end{definition}

To capture the process of reconciling divergent states, we use the
previously introduced merge operator: given two divergent server
states, we apply the merge operator to produce convergent state. We
assume the effects of merge are atomically visible: either all effects
of a merge are visible or none are. This assumption is not always
necessary but it simplifies our discussion and, as we later discuss,
is maintainable without coordination~\cite{ramp-txns,hat-vldb}.

\minihead{Maintaining validity} To make sure that both divergent and
convergent database states are valid and, therefore, that transactions
never observe invalid states, we introduce the following property:

\begin{definition}
A system is \textit{globally $I$-valid} iff all replicas always contain
$I$-valid state.
\end{definition}

\minihead{Coordination} Our system model is missing one final
constraint on coordination between concurrent transaction
execution:

\begin{definition}
  A system provides coordination-free execution for a set of
  transactions $T$ iff the progress of executing each $t\in T$ is only
  dependent on the versions of the items $t$ reads (i.e., $t$'s replica
  state).
\end{definition}

\noindent That is, in a coordination-free execution, each
transaction's progress towards commit/abort is independent of other
operations (e.g., writes, locking, validations) being performed on
behalf of other transactions. This precludes blocking
synchronization or communication across concurrently executing transactions.

\minihead{By example} Figure~\ref{fig:replicas} illustrates a
coordination-free execution of two transactions $T_1$ and $T_2$ on two
separate, fully-replicated physical servers. Each transaction commits
on its local replica, and the result of each transaction is reflected
in the transaction's local server state. After the transactions have
completed, the servers exchange state and, after applying the merge
operator, converge to the same state. Any transactions executing later on
either server will obtain a replica that includes the effects of both transactions.

\ifextended
\subsection{Extended Notes}

Our treatment of convergence uses a pair-wise definition (i.e., each
pair converges) rather than a system-wide definition (i.e., all nodes
converge). This is more restrictive than system-wide convergence but
allows us to make guarantees on progress despite partitions
between subsets of the servers (notably, precludes the use of protocols such
as background consensus, which can stall indefinitely in the presence
of partitions). Like many of the other decisions in our model, this
too could likely be relaxed at the cost of a less friendly
presentation of the concepts below.

The above decisions (including those footnoted in the primary text)
were made to strike a balance between generality and ease of
presentation in later applications of these concepts. In practice,
every non-\iconfluent set of transactions and invariants we
encountered (see
Sections~\ref{sec:bcc-theory},~\ref{sec:bcc-practice}) had a
counter-example consisting of a divergent execution consisting of a
single pair of transactions. However, we admit the possiblity that
more exotic transactions and merge functions might require in more
complicated histories, so we consider arbitrary histories
below. Precisely characterizing the expressive power of executions (in
terms of admissible output states) under a single-transaction
divergence versus multi-transaction divergence is a fairly interesting
question for future work.

\fi

\section{Consistency sans Coordination}
\label{sec:bcc-theory}

With a system model and goals in hand, we now address the question:
when do applications require coordination for correctness? The answer
depends not just on an application's transactions or on an 
application's invariants. Rather, the answer depends on the
\textit{combination} of the two under consideration. Our contribution
in this section is to formulate a criterion that will answer this
question for specific combinations in an implementation-agnostic
manner.

In this section, we focus almost exclusively on providing a formal
answer to this question. The remaining sections of this paper are
devoted to practical interpretation and application of these results.

\subsection{\iconfluence: Criteria Defined}

To begin, we introduce the central property (adapted from the constraint
programming literature~\cite{obs-confluence}) in our main
result: invariant confluence (hereafter, \iconfluence). Applied in a
transactional context, the \iconfluence property informally ensures
that divergent but $I$-valid database states can be merged into a
valid database state---that is, the set of valid states reachable by
executing transactions and merging their results is closed
(w.r.t. validity) under merge. In the next sub-section, we show that
\iconfluence analysis directly determines the potential for safe,
\cfree execution.

We say that $S_i$ is a \textit{$I$-$T$-reachable state} if, given an
invariant $I$ and set of transactions $T$ (with merge function
$\sqcup$), there exists a (partially ordered) sequence of transaction
and merge function invocations that yields $S_i$, and each
intermediate state produced by transaction execution or merge
invocation is also $I$-valid. We call these previous states
\textit{ancestor states}. Note that each ancestor state is either
$I$-$T$-reachable or is instead the initial state ($D_0$).

We can now formalize the \iconfluence property:

\begin{definition}[\iconfluence]
  A set of transactions $T$ is \iconfluent with respect to invariant
  $I$ if, for all $I$-$T$-reachable states $D_i$, $D_j$ with a common
  ancestor state, $D_i \sqcup D_j$ is $I$-valid.
\end{definition}

Figure~\ref{fig:iconfluence} depicts an \iconfluent merge of
two $I$-$T$-reachable states, each starting from a shared,
$I$-$T$-reachable state $D_s$. Two sequences of transactions
$t_{in}\dots t_{i1}$ and $t_{jm}\dots t_{j1}$ each independently
modify $D_s$. Under \iconfluence, the states produced by these
sequences ($D_{in}$ and $D_{jm}$) must be valid under
merge.\footnote{We require these states to have a common
  ancestor to rule out the possibility of merging states that could
  not have arisen from transaction execution (e.g., even if no
  transaction assigns IDs, merging two states that each have unique
  but overlapping sets of IDs could be invalid).}

\iconfluence holds for specific combinations of invariants and
transactions. In our payroll database example from
Section~\ref{sec:motivation}, removing a user from the database is
\iconfluent with respect to the invariant that user IDs are
unique. However, two transactions that remove two different users from
the database are not \iconfluent with respect to the invariant that
there exists at least one user in the database at all
times. Section~\ref{sec:bcc-practice} discusses additional
combinations of invariants (with greater precision).

\begin{figure}
\begin{center}
\includegraphics[width=\columnwidth]{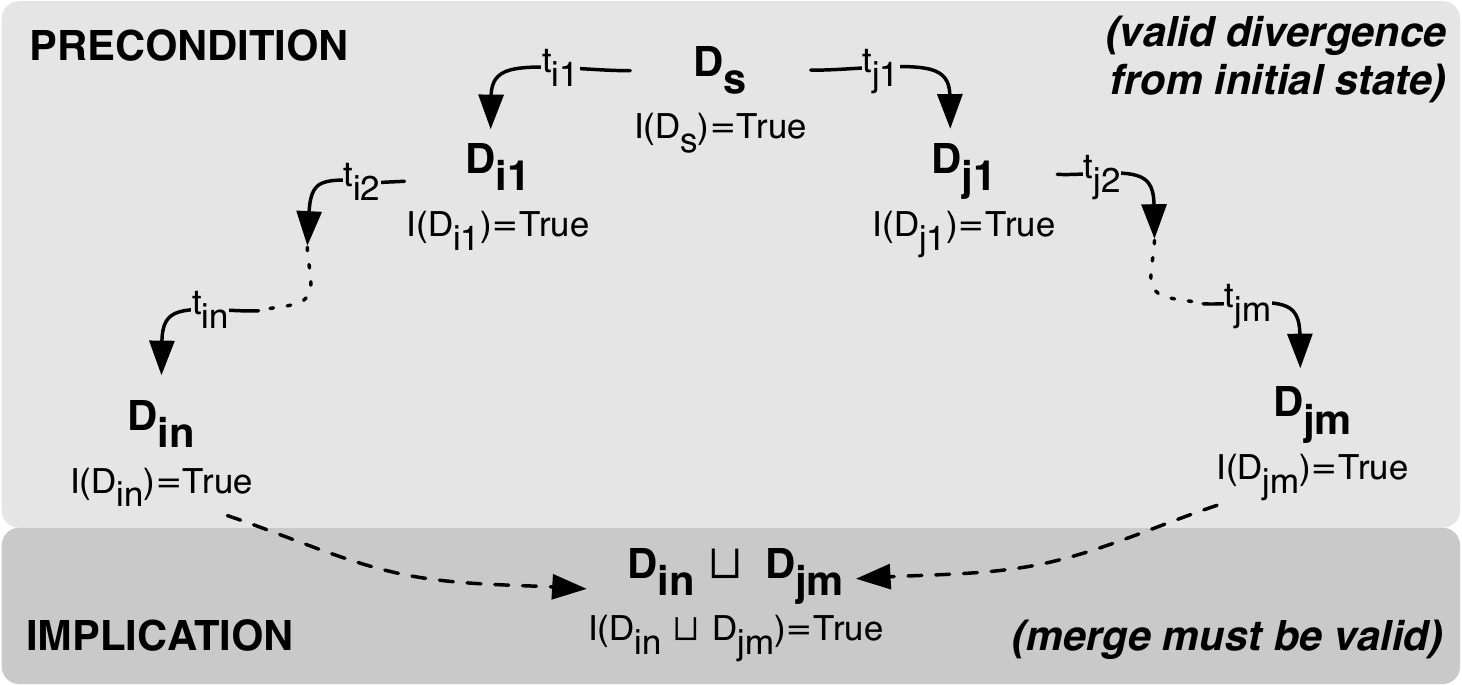}\vspace{-1em}
\end{center}
\caption{An \iconfluent execution illustrated via a diamond
  diagram. If a set of transactions $T$ is \iconfluent, then all
  database states reachable by executing and merging transactions in
  $T$ starting with a common ancestor ($D_s$) must
  be mergeable ($\sqcup$) into an $I$-valid database state.}
\label{fig:iconfluence}
\end{figure}

\subsection{\iconfluence and Coordination}
\label{sec:ic-result}

We can now apply \iconfluence to our goals from Section~\ref{sec:model}:

\begin{theorem}
\label{theorem:necessary}
A globally $I$-valid system can execute a set of transactions $T$ with
\cfreedom, transactional availability, convergence if and only if $T$
is \iconfluent with respect to $I$.
\end{theorem}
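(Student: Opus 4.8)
The plan is to prove the biconditional by establishing each direction separately: the ``if'' direction by explicit construction and the ``only if'' direction by contrapositive.

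For \textbf{sufficiency} (if $T$ is \iconfluent then \cfree execution is possible), I would exhibit a system attaining all four properties. The construction is the natural one suggested by the model: every transaction executes against its local replica with no inter-transaction communication, commits exactly when committing would not violate $I$ over that replica (and otherwise aborts), and servers reconcile divergent states asynchronously through the merge operator $\sqcup$. Coordination-freedom and transactional availability then hold immediately by construction, and convergence follows from the model's assumption that committed effects eventually propagate through $\sqcup$ (which is commutative, associative, and idempotent). The only nontrivial obligation is global validity, which I would establish by induction on the sequence of commit and merge events in an arbitrary execution, maintaining the invariant that every server state reached is $I$-$T$-reachable (and hence $I$-valid). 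The base case is $D_0$, valid by assumption. For a commit of $t$ against an $I$-$T$-reachable state, transactional availability guarantees $t$ commits only when the result is $I$-valid, so the new state is $I$-$T$-reachable by definition. For a merge of two reached states $D_i,D_j$, both are $I$-$T$-reachable and share $D_0$ as a common ancestor, so \iconfluence yields that $D_i \sqcup D_j$ is $I$-valid.

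For \textbf{necessity}, I would argue the contrapositive: assuming $T$ is not \iconfluent with respect to $I$, I would construct an execution witnessing that no system can simultaneously provide \cfreedom, transactional availability, convergence, and global validity. Non-\iconfluence furnishes $I$-$T$-reachable states $D_i,D_j$ with a common ancestor $D_s$ such that $D_i \sqcup D_j$ is not $I$-valid. By $I$-$T$-reachability there are transaction and merge sequences from $D_s$ producing $D_i$ and $D_j$ in which every intermediate state is $I$-valid. I would place these two sequences on two separate servers, both initialized to the attainable state $D_s$, and run them concurrently. Because each intermediate state is $I$-valid, transactional availability forces every transaction in both sequences to commit rather than abort, and \cfreedom ensures that each transaction's commit/abort decision depends only on its own replica, so the presence of the other server cannot alter the outcome. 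Hence one server reaches $D_i$ and the other reaches $D_j$. Convergence then requires the two servers to reconcile to a common state via $\sqcup$ in the absence of further writes, namely $D_i \sqcup D_j$, which is not $I$-valid---contradicting global validity.

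The main obstacle is the necessity direction, specifically the step that rules out any clever evasion by the system: I must use the precise definitions of \cfreedom and transactional availability to argue that the system has no mechanism to prevent both divergent states from materializing. \cfreedom pins down that no transaction can stall or change its decision based on another concurrent transaction, and transactional availability pins down that the only admissible reason to abort is a local invariant violation---which never occurs along these $I$-valid sequences (barring explicit self-aborts, which we may assume the witnessing sequences avoid). Together these close off coordination, blocking, and spurious aborts as escape routes, leaving the invalid merge unavoidable. A secondary care point, shared by both directions, is confirming that the common-ancestor requirement in the \iconfluence definition is met by the executions I construct; this is immediate since every reachable state descends from $D_0$ (or, for necessity, from the given $D_s$).
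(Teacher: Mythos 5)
Your overall architecture matches the paper's: sufficiency by exhibiting the natural ``check locally, then merge'' system, and necessity by a partitioning argument that forces the two divergent states to materialize and then forces the fatal merge via convergence. The sufficiency half is complete---your induction on commit and merge events, with $D_0$ as the common ancestor licensing each application of \iconfluence, is if anything more explicit than the paper's own construction.

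The gap is in the necessity half, at the step ``place these two sequences on two separate servers, both initialized to the attainable state $D_s$.'' The adversarial system $M$ is a black box that starts from $D_0$; you cannot initialize it to an arbitrary state, you must \emph{drive} it there by an actual execution. Since $D_s$ is an arbitrary $I$-$T$-reachable state, its witnessing history---and likewise the histories leading from $D_s$ to $D_i$ and to $D_j$---is a partial order of transactions \emph{and merge invocations}. Your forcing argument covers only the transactions: availability forces commits, and \cfreedom pins their outcomes. Nothing in your argument compels $M$ to perform the \emph{interior} merges of these histories; a system that simply declines to merge mid-history never produces $D_s$, $D_i$, or $D_j$, and the contradiction evaporates. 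The paper closes exactly this hole with a separate replay lemma (its Lemma~1): any globally $I$-valid, coordination-free, transactionally available, convergent system can be driven to any $I$-$T$-reachable state, by topologically sorting the history, running each transaction on an isolated server, and forcing each merge by placing precisely the two relevant servers in a partition together, whereupon pairwise convergence (no new writes, no indefinite delays) obliges them to merge. You already use this partition-plus-convergence trick for the final merge of $D_i$ and $D_j$; the missing idea is to apply it inductively to every merge in the witnessing histories, which is what turns ``initialized to $D_s$'' into a legitimate execution of $M$.
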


We provide a full proof of Theorem~\ref{theorem:necessary}
in~\rappendix{\apptheory} (which is straightforward) but provide a
sketch here. The backwards direction is by construction: if
\iconfluence holds, each replica can check each transaction's
modifications locally and replicas can merge independent modifications
to guarantee convergence to a valid state. The forwards direction uses
a partitioning argument~\cite{gilbert-cap} to derive a contradiction:
we construct a scenario under which a system cannot determine whether
a non-\iconfluent transaction should commit without violating one of
our desired properties (either compromising validity or availability,
diverging forever, or coordinating).

Theorem~\ref{theorem:necessary} establishes \iconfluence as a
necessary and sufficient condition for invariant-preserving,
coordination-free execution.  If \iconfluence holds, there exists a
correct, coordination-free execution strategy for the transactions; if
not, no possible implementation can guarantee these properties for the
provided invariants and transactions. That is, if \iconfluence does
not hold, there exists at least one execution of transactions on
separate replicas that will violate the given invariants when servers
converge. To prevent invalid states from occurring, at least one of
the transaction sequences will have to forego availability or
\cfreedom, or the system will have to forego convergence. \iconfluence
analysis is independent of any given implementation, and effectively
``lifts'' prior discussions of scalability, availability, and low
latency~\cite{hat-vldb,gilbert-cap,pacelc} to the level of application
(i.e., not ``I/O''~\cite{consistency-borders}) correctness. This
provides a useful handle on the implications of coordination-free
execution without requiring reasoning about low-level properties such
as physical data location and the number of servers.

\subsection{Discussion and Limitations}
\label{sec:theory-discussion}

\iconfluence captures a simple (informal) rule:
\textbf{\textit{coordination can only be avoided if all local commit
    decisions are globally valid}}. (Alternatively, commit decisions
are composable.) If two independent decisions to commit can result in
invalid converged state, then replicas must coordinate in order to
ensure that only one of the decisions is to commit. Given the
existence of an unsafe execution and the inability to distinguish
between safe and invalid executions using only local information, a
globally valid system \textit{must} coordinate in order to prevent the
invalid execution from arising.

\minihead{Use of invariants} Our use of invariants in \iconfluence is
key to achieving a \textit{necessary} and not simply sufficient
condition. By directly capturing application-level correctness
criteria via invariants, \iconfluence analysis only identifies
``true'' conflicts. This allows \iconfluence analysis to perform a
more accurate assessment of whether coordination is needed compared to
related conditions such as commutativity
(Section~\ref{sec:relatedwork}).

However, the reliance on invariants also has drawbacks. \iconfluence
analysis only guards against violations of any provided invariants. If
invariants are incorrectly or incompletely specified, an \iconfluent
database system may violate application-level correctness. If users
cannot guarantee the correctness and completeness of their invariants
and operations,
they should opt for a more conservative analysis or mechanism such as
employing serializable transactions. Accordingly, our development of
\iconfluence analysis provides developers with a powerful option---but
only if used correctly. If used incorrectly, \iconfluence allows
incorrect results, or, if not used at all, developers must resort to existing alternatives.

This final point raises several questions: can we specify invariants
in real-world use cases? Classic database concurrency control models
assume that ``the [set of application invariants] is generally not
known to the system but is embodied in the structure of the
transaction''~\cite{traiger-tods,eswaran-consistency}. Nevertheless,
since 1976, databases have introduced support for a finite set of
invariants~\cite{korth-serializability,decomp-semantics,garciamolina-semantics,ic-survey,ic-survey-two}
in the form of primary key, foreign key, uniqueness, and row-level
``check'' constraints~\cite{kemme-si-ic}. We can (and, in this paper,
do) analyze these invariants, which can---like many program
analyses~\cite{kohler-commutativity}---lead to new insights about
execution strategies. We have found the process of invariant
specification to be non-trivial but feasible in practice;
Section~\ref{sec:evaluation} describes some of our experiences.

\minihead{(Non-)determinism} \iconfluence analysis effectively
captures points of \textit{unsafe
  non-determinism}~\cite{consistency-borders} in transaction
execution. As we have seen in many of our examples thus far, total
non-determinism under concurrent execution can compromise
application-level consistency~\cite{calm,termrewriting}. But not all
non-determinism is bad: many desirable properties (e.g.,
classical distributed consensus among processes) involve forms 
of acceptable non-determinism (e.g., any proposed outcome is
acceptable as long as all processes agree)~\cite{paxos-commit}. In
many cases, maximizing safe concurrency requires non-determinism.

\iconfluence analysis allows this non-deterministic divergence of
database states but makes two useful guarantees about those
states. First, the requirement for global validity ensures safety (in
the form of invariants). Second, the requirement for convergence
ensures liveness (in the form of convergence). Accordingly, via its
use of invariants, \iconfluence allows users to scope non-determinism
while permitting only those states that are acceptable.

\section{Applying Invariant Confluence}
\label{sec:bcc-practice}
\label{sec:merge}

As a test for coordination requirements, \iconfluence exposes a
trade-off between the operations a user wishes to perform and the
properties she wishes to guarantee. At one extreme, if a user's
transactions do not modify database state, she can guarantee any satisfiable
invariant. At the other extreme, with no invariants, a user can safely
perform any operations she likes. The space in-between contains a spectrum of
interesting and useful combinations.

Until now, we have been largely concerned with formalizing
\iconfluence for abstract operations; in this section, we begin to
leverage this property. We examine a series of practical
invariants by considering several features of SQL, ending with
abstract data types and revisiting our payroll example along the
way. We will apply these results to full applications in
Section~\ref{sec:evaluation}.

In this section, we focus on providing intuition and informal
explanations of our \iconfluence analysis. Interested readers can find
a more formal analysis in \rappendix{\appapply},
including discussion of invariants not presented here. For convenience,
we reference specific proofs from \rappendix{\appapply} inline.

\begin{table}
\definecolor{yesgray}{gray}{0.92}
\begin{center}
\small
\begin{tabular}{|l|l|c|c|}
\hline
\textbf{Invariant} & \textbf{Operation} & $\mathcal{I}$-C? & Proof \#\\\hline

\rowcolor{yesgray}
Attribute Equality & Any & Yes &1\\
\rowcolor{yesgray}
Attribute Inequality & Any & Yes&2 \\
Uniqueness & Choose specific value & No&3\\
\rowcolor{yesgray}
Uniqueness & Choose some value & Yes&4\\
\texttt{AUTO\_INCREMENT} & Insert & No&5\\
\rowcolor{yesgray}
Foreign Key & Insert & Yes&6\\
Foreign Key & Delete & No&7\\
\rowcolor{yesgray}
Foreign Key & Cascading Delete & Yes&8\\
\rowcolor{yesgray}
Secondary Indexing & Update & Yes &9\\
\rowcolor{yesgray}
Materialized Views & Update & Yes &10\\\hline\hline
\rowcolor{yesgray}
> & Increment [Counter] & Yes &11 \\
< & Increment [Counter] & No &12\\
> & Decrement [Counter] & No &13\\
\rowcolor{yesgray}
< & Decrement [Counter] & Yes &14\\
\rowcolor{yesgray}
\texttt{[NOT] CONTAINS} & Any [Set, List, Map] & Yes &15, 16\\ 
\texttt{SIZE=} & Mutation [Set, List, Map] & No &17\\ \hline
\end{tabular}
\end{center}
\caption{Example SQL (top) and ADT invariant \iconfluence along with
  references to formal proofs in \rappendix{\appapply}.}
\label{table:invariants}
\end{table}

\subsection{\iconfluence for Relations}

We begin by considering several constraints found in SQL.

\minihead{Equality} As a warm-up, what if an application wants to
prevent a particular value from appearing in a database? For example,
our payroll application from Section~\ref{sec:motivation} might
require that every user have a last name, marking the \texttt{LNAME}
column with a \texttt{NOT NULL} constraint. While not particularly
exciting, we can apply \iconfluence analysis to insertions and updates
of databases with (in-)equality constraints (Claims 1, 2 in
\rappendix{\appapply}). Per-record inequality invariants are
\iconfluent, which we can show by contradiction: assume two database
states $S_1$ and $S_2$ are each $I$-$T$-reachable under per-record
in-equality invariant $I_e$ but that $I_e(S_1 \sqcup S_2)$ is
false. Then there must be a $r \in S_1 \sqcup S_2$ that violates $I_e$
(i.e., $r$ has the forbidden value). $r$ must appear in $S_1$, $S_2$,
or both. But, that would imply that one of $S_1$ or $S_2$ is not
$I$-valid under $I_e$, a contradiction.

\minihead{Uniqueness} We can also consider common uniqueness
invariants (e.g., \texttt{PRIMARY KEY} and \texttt{UNIQUE}
constraints). For example, in our payroll example, we wanted user IDs
to be unique. In fact, our earlier discussion in
Section~\ref{sec:motivation} already provided a counterexample showing
that arbitrary insertion of users is not \iconfluent under these
invariants: $\{$Stan:5$\}$ and $\{$Mary:5$\}$ are both
$I$-$T$-reachable states that can be created by a sequence of
insertions (starting at $S_0=\{\}$), but their merge---$\{$Stan:5,
Mary:5$\}$---is not $I$-valid. Therefore, uniqueness is not
\iconfluent for inserts of unique values (Claim 3). However, reads and
deletions are both \iconfluent under uniqueness invariants: reading
and removing items cannot introduce duplicates.

Can the database safely \textit{choose} unique values on behalf of
users (e.g., assign a new user an ID)? In this case, we can achieve
uniqueness without coordination---as long as we have a notion of
replica membership (e.g., server or replica IDs). The difference is
subtle (``grant this record this specific, unique ID'' versus ``grant
this record some unique ID''), but, in a system model
with membership (as is practical in many contexts), is powerful. If
replicas assign unique IDs within their respective portion of the ID
namespace, then merging locally valid states will also be globally
valid (Claim 4).

\minihead{Foreign Keys} We can consider more complex invariants, such
as foreign key constraints. In our payroll example, each employee
belongs to a department, so the application could specify a constraint
via a schema declaration to capture this relationship (e.g.,
\texttt{EMP.D\_ID FOREIGN KEY REFERENCES DEPT.ID}).

Are foreign key constraints maintainable without coordination? Again,
the answer depends on the actions of transactions modifying the data
governed by the invariant. Insertions under foreign key constraints
\textit{are} \iconfluent (Claim 6). To show this, we again attempt to find two
$I$-$T$-reachable states that, when merged, result in invalid
state. Under foreign key constraints, an invalid state will contain a
record with a ``dangling pointer''---a record missing a corresponding
record on the opposite side of the association. If we assume there
exists some invalid state $S_1 \sqcup S_2$ containing a record $r$
with an invalid foreign key to record $f$, but $S_1$ and $S_2$ are
both valid, then $r$ must appear in $S_1$, $S_2$, or both. But, since
$S_1$ and $S_2$ are both valid, $r$ must have a corresponding foreign
key record ($f$) that ``disappeared'' during merge. Merge (in the
current model) does not remove versions, so this is impossible.

From the perspective of \iconfluence analysis, foreign key constraints
concern the \textit{visibility} of related updates: if individual
database states maintain referential integrity, a non-destructive
merge function such as set union cannot cause tuples to ``disappear''
and compromise the constraint. This also explains why models such as
read committed~\cite{adya-isolation} and read
atomic~\cite{adya-isolation} isolation as well as causal
consistency~\cite{hat-vldb} are also achievable without coordination:
simply restricting the visibility of updates in a given transaction's
read set does not require coordination between concurrent operations.

Deletions and modifications under foreign key constraints are more
challenging. Arbitrary deletion of records is unsafe: a user might be
added to a department that was concurrently deleted (Claim
7). However, performing cascading deletions (e.g., SQL \texttt{DELETE
  CASCADE}), where the deletion of a record also deletes \textit{all}
matching records on the opposite end of the association, is
\iconfluent under foreign key constraints (Claim 8). We can generalize
this discussion to updates (and cascading updates).

\minihead{Materialized Views} Applications often pre-compute results
to speed query performance via a materialized view~\cite{tamer-book}
(e.g., \texttt{UNREAD\_CNT} as \texttt{SELECT}\texttt{
}\texttt{COUNT(*)}\texttt{ }\texttt{FROM}\texttt{
}\texttt{emails}\texttt{ }\texttt{WHERE}\texttt{ }\texttt{read\_date =
  NULL}). We can consider a class of invariants that specify that
materialized views reflect primary data; when a transaction (or merge
invocation) modifies data, any relevant materialized views should be
updated as well. This requires installing updates at the same time as
the changes to the primary data are installed (a problem related to
maintaining foreign key constraints). However, given that a view
only reflects primary data, there are no ``conflicts.'' Thus,
materialized view maintenance updates are \iconfluent (Claim 10).

\subsection{\iconfluence for Data Types}

So far, we have considered databases that store growing sets of
immutable versions. We have used this model to analyze several useful
constraints, but, in practice, databases do not (often) provide these
semantics, leading to a variety of interesting anomalies. For example,
if we implement a user's account balance using a ``last writer wins''
merge policy~\cite{crdt}, then performing two concurrent withdrawal
transactions might result in a database state reflecting only one
transaction (a classic example of the Lost Update
anomaly)~\cite{adya-isolation,hat-vldb}. To avoid variants of these
anomalies, many optimistic, coordination-free database designs have
proposed the use of \textit{abstract data types} (ADTs), providing
merge functions for a variety of uses such as counters, sets, and
maps~\cite{crdt,atomictransactions,weihl-thesis,blooml} that ensure
that all updates are reflected in final database state. For example, a
database can represent a simple counter ADT by recording the number of
times each transaction performs an \texttt{increment} operation on the
counter~\cite{crdt}.

\iconfluence analysis is also applicable to these ADTs and their
associated invariants. For example, a row-level ``greater-than''
(\texttt{>}) threshold invariant is \iconfluent for counter
\texttt{increment} and \texttt{assign} ($\gets$) but not
\texttt{decrement} (Claims 11, 13), while a row-level ``less-than''
(\texttt{<}) threshold invariant is \iconfluent for counter
\texttt{decrement} and \texttt{assign} but not \texttt{increment}
(Claims 12, 14). This means that, in our payroll example, we can
provide \cfree support for concurrent salary increments but not
concurrent salary decrements. ADTs (including lists, sets, and maps)
can be combined with standard relational constraints like materialized
view maintenance (e.g., the ``total salary'' row should contain the
sum of employee salaries in the \texttt{employee} table).  This
analysis presumes user program explicitly use ADTs, and, as with our
generic set-union merge, \iconfluence ADT analysis requires a
specification of the ADT merge behavior (\rappendix{\appapply}
provides several examples).

\subsection{Discussion and Limitations}

We have analyzed a number of combinations of invariants and operations
(shown in Table~\ref{table:invariants}). These results are by no
means comprehensive, but they are expressive for many applications
(Section~\ref{sec:evaluation}). In this section, we discuss lessons from this
classification process.

\minihead{Analysis mechanisms} Here (and in~\rappendix{\appapply}), we
manually analyzed particular invariant and operation combinations,
demonstrating each to be \iconfluent or not. To study actual
applications, we can apply these labels via simple static
analysis. Specifically, given invariants (e.g., captured via SQL DDL)
and transactions (e.g., expressed as stored procedures), we can
examine each invariant and each operation within each transaction and
identify pairs that we have labeled as \iconfluent or
non-\iconfluent. Any pairs labeled as \iconfluent can be marked as
safe, while, for soundness (but not completeness), any unrecognized
operations or invariants can be flagged as potentially
non-\iconfluent. Despite its simplicity (both conceptually and in
terms of implementation), this technique---coupled with the results of
Table~\ref{table:invariants}---is sufficiently powerful to
automatically characterize the I-confluence of the applications we
consider in Section~\ref{sec:evaluation} when expressed in SQL (with
support for multi-row aggregates like Invariant 8 in
Table~\ref{table:tpcc-invariants}).

By growing our recognized list of \iconfluent pairs on an as-needed
basis (via manual analysis of the pair), the above technique has
proven useful---due in large part to the common re-use of invariants
like foreign key constraints. However, one could use more complex
forms of program analysis. For example, one might analyze the
\iconfluence of \textit{arbitrary} invariants, leaving the task of
proving or disproving \iconfluence to an automated model checker or
SMT solver. While I-confluence---like monotonicity and commutativity
(Section~\ref{sec:relatedwork})---is undecidable for arbitrary
programs, others have recently shown this alternative approach (e.g.,
in commutativity analysis~\cite{kohler-commutativity,redblue-new} and
in invariant generation for view serializable
transactions~\cite{writes-forest}) to be fruitful for restricted
languages. We view language design and more automated analysis as an
interesting area for more speculative research.

\minihead{Recency and session support} Our proposed invariants are
declarative, but a class of useful semantics---recency, or real-time
guarantees on reads and writes---are operational (i.e., they pertain
to transaction execution rather than the state(s) of the
database). For example, users often wish to read data that is
up-to-date as of a given point in time (e.g., ``read
latest''~\cite{pnuts} or linearizable~\cite{gilbert-cap}
semantics). While traditional isolation models do not directly address
these recency guarantees~\cite{adya-isolation}, they are often
important to programmers. Are these models \iconfluent? We can attempt
to simulate recency guarantees in \iconfluence analysis by logging the
result of all reads and any writes with a timestamp and requiring that
all logged timestamps respect their recency guarantees (thus treating
recency guarantees as invariants over recorded read/write execution
traces). However, this is a somewhat pointless exercise: it is well
known that recency guarantees are unachievable with transactional
availability~\cite{hat-vldb,gilbert-cap,davidson-survey}. Thus, if
application reads face these requirements, coordination is
required. Indeed, when application ''consistency'' means ``recency,''
systems cannot circumvent speed-of-light delays.

If users wish to ``read their writes'' or desire stronger ``session''
guarantees~\cite{bayou} (e.g., maintaining recency on a per-user or
per-session basis), they must maintain affinity or
``stickiness''~\cite{hat-vldb} with a given (set of) replicas. These
guarantees are also expressible in the \iconfluence model and do
not require coordination between different users' or sessions'
transactions.

\minihead{Physical and logical replication} We have used the concept
of replicas to reason about concurrent transaction execution. However,
as previously noted, our use of replicas is simply a formal device and
is independent of the actual concurrency control mechanisms at
work. Specifically, reasoning about replicas allows us to separate the
\textit{analysis} of transactions from their \textit{implementation}:
just because a transaction is executed with (or without) coordination
does not mean that all query plans or implementations require (or do
not require) coordination~\cite{hat-vldb}. However, in deciding on an
implementation, there is a range of design decisions yielding a
variety of performance trade-offs. Simply because an application is
\iconfluent does not mean that all implementations will perform
equally well. Rather, \iconfluence ensures that a coordination-free
implementation exists.

\minihead{Requirements and restrictions} Our techniques are predicated
on the ability to correctly and completely specify invariants and
inspect user transactions; without such a correctness specification,
for arbitrary transaction schedules, serializability is---in a
sense---the ``optimal'' strategy~\cite{kung1979optimality}. By casting
correctness in terms of admissible application states rather than as a
property of read-write schedules, we achieve a more precise statement
of coordination overheads. However, as we have noted, this does not
obviate the need for coordination in all cases. Finally, when full
application invariants are unavailable, individual, high-value
transactions may be amenable to optimization via \iconfluence
coordination analysis.

\section{Experiences With Coordination}
\label{sec:evaluation}

When achievable, coordination-free execution enables scalability
limited to that of available hardware. This is powerful: an
\iconfluent application can scale out without
sacrificing correctness, latency, or availability. In
Section~\ref{sec:bcc-practice}, we saw combinations of invariants
and transactions that were \iconfluent and others that were not. In this
section, we apply these combinations to the workloads of the
OLTP-Bench suite~\cite{oltpbench}, with a focus on the TPC-C
benchmark. Our focus is on the coordinaton required in order to
correctly execute each and the resulting, coordination-related
performance costs.

\subsection{TPC-C Invariants and Execution}
\label{sec:tpcc-invariants}

The TPC-C benchmark is the gold standard for database concurrency
control~\cite{oltpbench} both in research and in industry~\cite{tpcc},
and in recent years has been used as a yardstick for distributed
database concurrency control
performance~\cite{calvin,hstore,silo}. How much coordination does
TPC-C actually require a compliant execution?

The TPC-C workload is designed to be representative of a wholesale
supplier's transaction processing requirements. The workload has a number of
application-level correctness criteria that represent basic business
needs (e.g., order IDs must be unique) as formulated by the TPC-C
Council and which must be maintained in a compliant run. We can
interpret these well-defined ``consistency criteria'' as invariants and
subsequently use \iconfluence analysis to determine which transactions
require coordination and which do not.

Table~\ref{table:tpcc-invariants} summarizes the twelve invariants
found in TPC-C as well as their \iconfluence analysis results as
determined by Table~\ref{table:invariants}. We classify the invariants
into three broad categories: materialized view maintenance, foreign
key constraint maintenance, and unique ID assignment. As we discussed
in Section~\ref{sec:bcc-practice}, the first two categories are
\iconfluent (and therefore maintainable without coordination) because
they only regulate the \textit{visibility} of updates to multiple
records. Because these (10 of 12) invariants are \iconfluent under the
workload transactions, there exists some execution strategy that does
not use coordination. However, simply because these invariants are
\iconfluent does not mean that \textit{all} execution strategies will
scale well: for example, using locking would \textit{not} be
coordination-free.

\begin{table}[t!]
\definecolor{yesgray}{gray}{0.92}
\small
\begin{tabular}{|l|l|l|l|l|}
  \hline
  \textbf{\#} & \textbf{Informal Invariant Description} &
  \textbf{Type} & 
  \textbf{Txns} & \textbf{$\mathcal{I}$-C} \\\hline
  \rowcolor{yesgray}
  1 & {\scriptsize YTD wh sales = sum(YTD district sales)} & MV &
  P & Yes\\
  2 &  {\scriptsize Per-district order IDs are sequential} &$\mathrm{S_{ID}}$+FK  & N, D & No\\
  3 &  {\scriptsize New order IDs are sequentially assigned} & $\mathrm{S_{ID}}$  & N, D & No \\
  \rowcolor{yesgray}
  4 & {\scriptsize  Per-district, item order count = roll-up} & MV &
  N & Yes\\
  \rowcolor{yesgray}
  5 &  {\scriptsize Order carrier is set iff order is pending} & FK &
  N, D & Yes \\
  \rowcolor{yesgray}
  6 &  {\scriptsize Per-order item count = line item roll-up} & MV & N
  & Yes \\
  \rowcolor{yesgray}
  7 &  {\scriptsize Delivery date set iff carrier ID set} & FK & D & Yes \\
  \rowcolor{yesgray}
  8 &  {\scriptsize YTD wh = sum(historical wh)} & MV &
  D & Yes \\
  \rowcolor{yesgray}
  9 &  {\scriptsize YTD district = sum(historical district)} &
  MV & P & Yes \\
  \rowcolor{yesgray}
  10 &  {\scriptsize Customer balance matches expenditures} & MV & P, D & Yes \\
  \rowcolor{yesgray}
  11 &  {\scriptsize Orders reference New-Orders table}  & FK & N & Yes \\
  \rowcolor{yesgray}
  12 &  {\scriptsize Per-customer balance = cust. expenditures} & MV & P, D &
  Yes \\\hline

\end{tabular}
\caption{TPC-C Declared ``Consistency Conditions'' (3.3.2.x) and \iconfluence
  analysis results
  (Invariant type: MV: materialized view, $\mathbf{S_{ID}}$: sequential ID assignment, FK: foreign
  key; Transactions: N: New-Order, P: Payment, D: Delivery).}
\label{table:tpcc-invariants}
\end{table}

As one coordination-free execution strategy (which we implement in
Section~\ref{sec:evaltpcc}) that respects the foreign key and
materialized view invariants, we can use RAMP transactions, which
provide atomically visible transactional updates across servers
without relying on coordination for correctness~\cite{ramp-txns}. In
brief, RAMP transactions employ limited multi-versioning and metadata
to ensure that readers and writers can always proceed concurrently:
any client whose reads overlap with another client's writes to the
same item(s) can use metadata stored in the items to fetch any
``missing'' writes from the respective servers. A standard RAMP
transaction over data items suffices to enforce foreign key
constraints, while a RAMP transaction over commutative counters as
described in~\cite{ramp-txns} is sufficient to enforce the TPC-C
materialized view constraints.

Two of TPC-C's invariants are not \iconfluent with respect to the
workload transactions and therefore \textit{do} require
coordination. On a per-district basis, order IDs should be assigned
sequentially (both uniquely and sequentially, in the New-Order
transaction) and orders should be processed sequentially (in the
Delivery transaction). If the database is partitioned by warehouse (as
is standard~\cite{silo,calvin,hstore}), the former is a distributed
transaction (by default, $10\%$ of New-Order transactions span
multiple warehouses). The benchmark specification allows the latter to
be run asynchronously and in batch mode on a per-warehouse
(non-distributed) basis, so we, like others~\cite{calvin,silo}, focus
on New-Order. Including additional transactions like the read-only
Order-Status in the workload mix would increase performance due to the
transactions' lack of distributed coordination and (often
considerably) smaller read/write footprints.

\minihead{Avoiding New-Order Coordination} New-Order is not
\iconfluent with respect to the TPC-C invariants, so we can always
fall back to using serializable isolation. However, the per-district
ID assignment records (10 per warehouse) would become a point of contention,
limiting our throughput to effectively $\frac{100W}{RTT}$ for a
$W$-warehouse TPC-C benchmark with the expected $10\%$ distributed
transactions. Others~\cite{silo} (including us, in prior
work~\cite{hat-vldb}) have suggested disregarding consistency criteria
3.3.2.3 and 3.3.2.4, instead opting for unique but non-sequential ID
assignment: this allows inconsistency and violates the benchmark
compliance criteria.

During a compliant run, New-Order transactions must
coordinate. However, as discussed above, only the ID assignment
operation is non-I-confluent; the remainder of the operations in the
transaction can execute coordination-free. With some effort, we can
avoid distributed coordination. A na\"{\i}ve implementation might grab
a lock on the appropriate district's ``next ID'' record, perform
(possibly remote) remaining reads and writes, then release the lock at
commit time. Instead, as a more efficient solution, New-Order can
defer ID assignment until commit time by introducing a layer of
indirection. New-Order transactions can generate a temporary, unique,
but non-sequential ID (\texttt{tmpID}) and perform updates using this
ID using a RAMP transaction (which, in turn, handles the foreign key
constraints)~\cite{ramp-txns}. Immediately prior to transaction
commit, the New-Order transaction can assign a ``real'' ID by
atomically incrementing the current district's``next ID'' record
(yielding \texttt{realID}) and recording the \texttt{[tmpID, realID]}
mapping in a special ID lookup table. Any read requests for the
\texttt{ID} column of the Order, New-Order, or Order-Line tables can
be safely satisfied (transparently to the end user) by joining with
the ID lookup table on \texttt{tmpID}. In effect, the New-Order ID
assignment can use a nested atomic
transaction~\cite{atomictransactions} upon commit, and all
coordination between any two transactions is confined to a
single server.

\subsection{Evaluating TPC-C New-Order}
\label{sec:evaltpcc}

\begin{figure}
\hspace{1.5em}\includegraphics{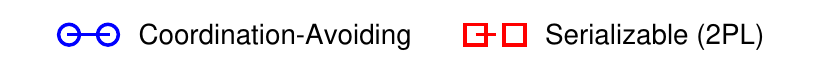}\vspace{-1em}
\includegraphics[width=\columnwidth]{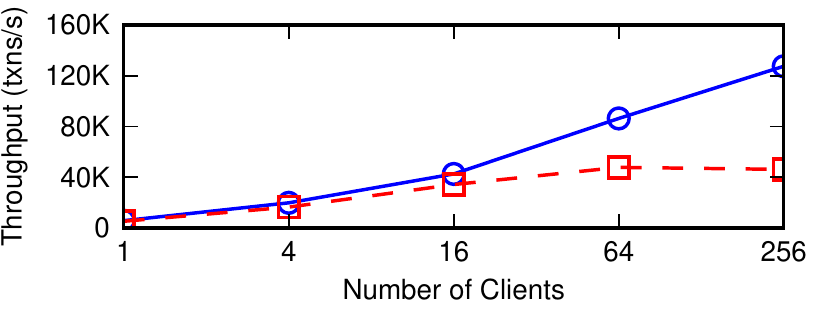}
\includegraphics[width=\columnwidth]{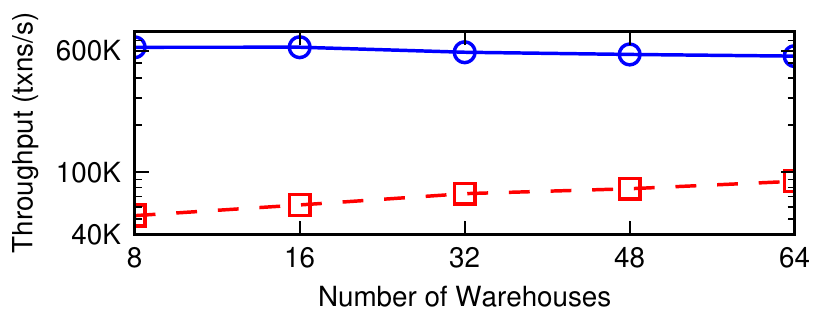}
\includegraphics[width=\columnwidth]{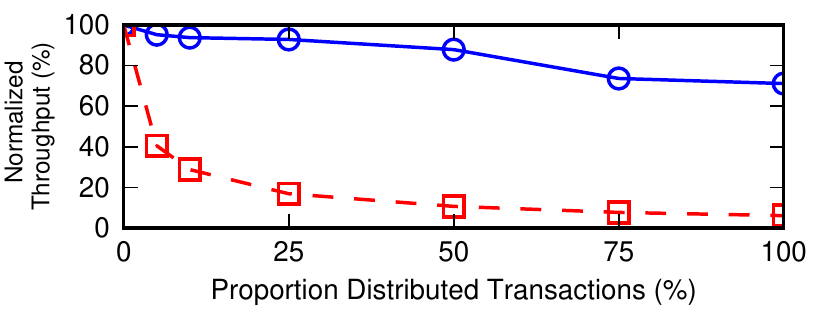}\vspace{-.5em}
\caption{TPC-C New-Order throughput across eight servers.}
\label{fig:clients}
\end{figure}

We subsequently implemented the above execution strategy in a distributed
database prototype to quantify the overheads associated with
coordination in TPC-C New-Order. In brief, the coordination-avoiding
query plan scales linearly to over 12.7M transactions per second on
$200$ servers while substantially outperforming distributed two-phase
locking. Our goal here is to demonstrate---beyond the microbenchmarks of
Section~\ref{sec:motivation}---that safe but judicious use of
coordination can have meaningful positive effect on performance.

\begin{comment}
We originally ported TPC-C
New-Order to our prior RAMP transaction prototype~\cite{ramp-txns} but
found that the larger transaction footprint was poorly suited for the
threading model. Each New-Order transaction generates between 13--33
reads and 13--33 writes, so switching to an event-based RPC layer with
support for batching and connection pooling delivered an approximate
factor of two improvement in throughput, while additional optimization
(e.g., more efficient indexing, fewer object allocations) delivered
another factor of two. 
\end{comment}

\minihead{Implementation and Deployment} We employ a multi-versioned
storage manager, with RAMP-Fast transactions for snapshot reads and
atomically visible writes/``merge'' (providing a variant of regular
register semantics, with writes visible to later transactions after
commit)~\cite{ramp-txns} and implement the nested atomic transaction
for ID assignment as a sub-procedure inside RAMP-Fast's server-side
commit procedure (using spinlocks). We implement transactions as
stored procedures and fulfill the TPC-C ``Isolation Requirements'' by
using read and write buffering as proposed in~\cite{hat-vldb}. As is
common~\cite{calvin,abadi-vll,hstore,jones-dtxn}, we disregard
per-warehouse client limits and ``think time'' to increase load per
warehouse. In all, our base prototype architecture is similar to that
of~\cite{ramp-txns}: a JVM-based partitioned, main-memory, mastered
database.

For an apples-to-apples comparison with a coordination-intensive
technique within the same system, we also implemented textbook
two-phase locking (2PL)~\cite{bernstein-book}, which provides
serializability but also requires distributed coordination. We totally
order lock requests across servers to avoid deadlock, batching lock
requests to each server and piggybacking read and write requests on
lock request RPC. As a validation of our implementation, our 2PL
prototype achieves per-warehouse (and sometimes aggregate) throughput
similar to (and often in excess of) several recent serializable
database implementations (of both 2PL and other
approaches)~\cite{calvin,abadi-vll,hstore,jones-dtxn}.

By default, we deploy our prototype on eight EC2 \texttt{cr1.8xlarge}
instances in the Amazon EC2 \texttt{us-west-2} region (with
non-co-located clients) with one warehouse per server (recall there
are 10 ``hot'' district ID records per warehouse) and report the
average of three 120 second runs.

\minihead{Basic behavior} Figure~\ref{fig:clients} shows performance
across a variety of configurations, which we detail below. Overall,
the coordination-avoiding query plan far outperforms the serializable
execution. The coordination-avoiding query plan performs some
coordination, but, because coordination points are not distributed
(unlike 2PL), physical resources (and not coordination) are the bottleneck.

\miniheadit{Varying load} As we increase the number of clients, the
coordination-avoiding query plan throughput increases linearly, while
2PL throughput increases to $40$K transactions per second, then levels
off. As in our microbenchmarks in Section~\ref{sec:motivation}, the
former utilizes available hardware resources (bottlenecking
on CPU cycles at $640$K transactions per second), while the latter
bottlenecks on logical contention.

\miniheadit{Physical resource consumption} To understand the overheads
of each component in the coordination-avoiding query plan, we used JVM
profiling tools to sample thread execution while running at peak
throughput, attributing time spent in functions to relevant modules
within the database implementation (where possible):

\vspace{-.5em}
\begin{center}
\centering
\small
\setlength{\fboxsep}{4pt}
\fbox{
\begin{tabular}{l r}
\textbf{Code Path} & \textbf{Cycles}\\
Storage Manager (Insert, Update, Read) & 45.3\%\\
Stored Procedure Execution & 14.4\%\\
RPC and Networking & 13.2\%\\
Serialization & 12.6\%\\
ID Assignment Synchronization (spinlock contention) & 0.19\%\\
Other & 14.3\%\\
\end{tabular}}
\end{center}
\vspace{-.5em}

The coordination-avoiding prototype spends a large portion of
execution in the storage manager, performing B-tree modifications and
lookups and result set creation, and in RPC/serialization. In contrast
to 2PL, the prototype spends less than $0.2\%$ of time coordinating,
in the form of waiting for locks in the New-Order ID assignment; the
(single-site) assignment is fast (a linearizable integer increment and
store, followed by a write and fence instruction on the spinlock), so
this should not be surprising. We observed large throughput penalties
due to garbage collection (GC) overheads (up to 40\%)---an unfortunate
cost of our highly compact (several thousand lines of Scala),
JVM-based implementation. However, even in this current prototype,
physical resources are the bottleneck---not coordination.

\miniheadit{Varying contention} We subsequently varied the number of
``hot,'' or contended items by increasing the number of warehouses on
each server. Unsurprisingly, 2PL benefits from a decreased
contention, rising to over $87$K transactions per second with $64$
warehouses. In contrast, our coordination-avoiding implementation is
largely unaffected (and, at $64$ warehouses, is even negatively
impacted by increased GC pressure). The coordination-avoiding query
plan is effectively agnostic to read/write contention.

\miniheadit{Varying distribution} We also varied the percentage of
distributed transactions. The coordination-avoiding query plan
incurred a $29\%$ overhead moving from no distributed transactions to
all distributed transactions due to increased serialization overheads
and less efficient batching of RPCs. However, the 2PL implementation decreased in throughput by over $90\%$ (in line
with prior results~\cite{abadi-vll,calvin}, albeit exaggerated here
due to higher contention) as more requests stalled due to coordination
with remote servers.

\minihead{Scaling out} Finally, we examined our prototype's scalability,
again deploying one warehouse per server. As Figure~\ref{fig:scaleout}
demonstrates, our prototype scales linearly, to over 12.74 million
transactions per second on 200 servers (in light of our earlier
results, and, for economic reasons, we do not run 2PL at this
scale). Per-server throughput is largely constant after 100 servers,
at which point our deployment spanned all three \texttt{us-west-2}
datacenters and experienced slightly degraded per-server performance.
While we make use of application semantics, we are unaware of
any other compliant multi-server TPC-C implementation that has
achieved greater than 500K New-Order transactions per
second~\cite{calvin,jones-dtxn,hstore,abadi-vll}.

\minihead{Summary} We present these quantitative results as a proof of
concept that executing even challenging workloads like TPC-C that
contain complex integrity constraints are not necessarily at odds with
scalability if implemented in a coordination-avoiding
manner. Distributed coordination need not be a bottleneck for all
applications, even if conflict serializable execution indicates
otherwise. Coordination avoidance ensures that physical
resources---and not logical contention---are the system bottleneck
whenever possible.

\begin{figure}
\includegraphics[width=\columnwidth]{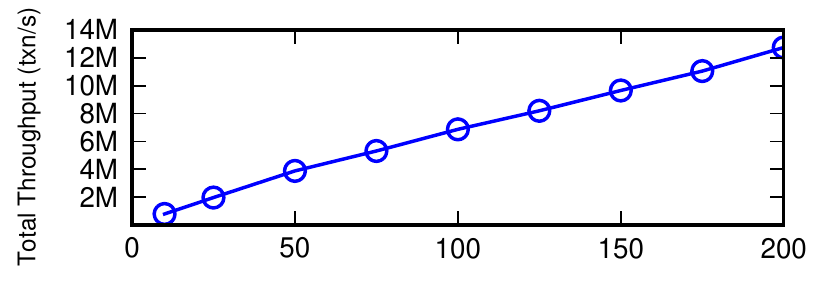}\vspace{-.5em}
\includegraphics[width=\columnwidth]{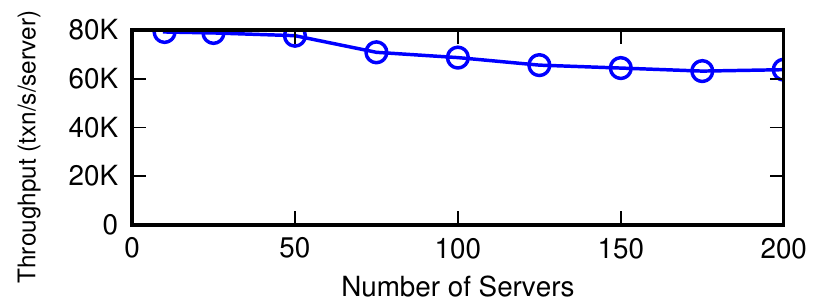}\vspace{-.5em}
\caption{Coordination-avoiding New-Order scalability.}
\label{fig:scaleout}
\end{figure}

\subsection{Analyzing Additional Applications}

These results begin to quantify the effects of coordination-avoiding
concurrency control. If considering \textit{application-level}
invariants, databases only have to pay the price of coordination when
necessary. We were surprised that the ``current industry standard for
evaluating the performance of OLTP systems''~\cite{oltpbench} was so
amenable to coordination-avoiding execution---at least for compliant
execution as defined by the official TPC-C specification.

For greater variety, we also studied the workloads of the recently
assembled OLTP-Bench suite~\cite{oltpbench}, performing a similar
analysis to that of Section~\ref{sec:tpcc-invariants}. We found (and
confirmed with an author of~\cite{oltpbench}) that for nine of
fourteen remaining (non-TPC-C) OLTP-Bench applications, the workload
transactions did not involve integrity constraints (e.g., did not
modify primary key columns), one (\texttt{CH-benCHmark}) matched
TPC-C, and two specifications implied (but did not explicitly state) a
requirement for unique ID assignment (\texttt{AuctionMark}'s
\texttt{new-purchase} order completion, \texttt{SEATS}'s
\texttt{NewReservation} seat booking; achievable like TPC-C order
IDs). The remaining two benchmarks, \texttt{sibench} and
\texttt{smallbank} were specifically designed (by an author of this
paper) as research benchmarks for serializable isolation. Finally, the
three ``consistency conditions'' required by the newer TPC-E benchmark
are a proper subset of the twelve conditions from TPC-C considered
here (and are all materialized counters). It is possible (even likely)
that these benchmarks are underspecified, but according to official
specifications, TPC-C contains the most coordination-intensive
invariants among all but two of the OLTP-Bench workloads.

Anecdotally, our conversations and experiences with real-world
application programmers and database developers have not identified
invariants that are radically different than those we have studied
here. A simple thought experiment identifying the invariants required
for a social networking site yields a number of invariants but none
that are particularly exotic (e.g., username uniqueness, foreign key
constraints between updates, privacy
settings~\cite{pnuts,ramp-txns}). Nonetheless, we view the further study of
real-world invariants to be a necessary area for future
investigation. In the interim, these preliminary results hint at what
is possible with coordination-avoidance as well as the costs of
coordination if applications are not \iconfluent.

\section{Related Work}
\label{sec:relatedwork}

Database system designers have long sought to manage the trade-off
between consistency and coordination. As we have discussed,
serializability and its many implementations (including lock-based,
optimistic, and pre-scheduling
mechanisms)~\cite{silo,bernstein-book,tamer-book,hstore,gray-virtues,calvin,eswaran-consistency,sdd1}
are sufficient for maintaining application correctness. However,
serializability is not always necessary: as discussed in
Section~\ref{sec:intro}, serializable databases do not allow certain
executions that are correct according to application semantics.  This
has led to a large class of application-level---or
semantic---concurrency control models and mechanisms that admit greater
concurrency. There are several surveys on this topic, such
as~\cite{tamer-book,ic-survey}, and, in our solution, we integrate
many concepts from this literature.

\minihead{Commutativity} One of the most popular alternatives to
serializability is to exploit \textit{commutativity}: if transaction
return values (e.g., of reads) and/or final database states are
equivalent despite reordering, they can be executed
simultaneously~\cite{weihl-thesis,kohler-commutativity,redblue}. Commutativity
is often sufficient for correctness but is not necessary. For example,
if an analyst at a wholesaler creates a report on daily cash flows,
any concurrent sale transactions will \textit{not} commute with the
report (the results will change depending on whether the sale completes
before or after the analyst runs her queries). However, the report
creation is \iconfluent with respect to, say, the invariant that every
sale in the report references a customer from the customers
table. \cite{kohler-commutativity,lamport-audit} provide additional
examples of safe non-commutativity.

\minihead{Monotonicity and Convergence} The CALM
Theorem~\cite{ameloot-calm} shows that monotone programs exhibit
deterministic outcomes despite re-ordering. CRDT objects~\cite{crdt}
similarly ensure convergent outcomes that reflect all updates made to
each object. These outcome determinism and convergence guarantees are
useful \textit{liveness} properties~\cite{schneider-concurrent} (e.g.,
a converged CRDT OR-Set reflects all concurrent additions and
removals) but do not prevent users from observing inconsistent
data~\cite{redblue-new}, or \textit{safety} (e.g., the CRDT OR-Set
does not---by itself---enforce invariants, such as ensuring that no
employee belongs to two departments), and are therefore not sufficient
to guarantee correctness for all applications. Further understanding
the relationship between \iconfluence and CALM is an interesting area
for further exploration (e.g., as \iconfluence adds safety to
confluence, is there a natural extension of monotone logic that
incorporates \iconfluent invariants---say, via an ``invariant-scoped''
form of monotonicity?).

\minihead{Use of Invariants} A large number of database
designs---including, in restricted forms, many commercial databases
today---use various forms of application-supplied invariants,
constraints, or other semantic descriptions of valid database states
as a specification for application correctness
(e.g.,~\cite{korth-serializability,kemme-si-ic,garciamolina-semantics,ic-survey,ic-survey-two,decomp-semantics,redblue,writes-forest,davidson-survey,local-verification,redblue-new}). We
draw inspiration and, in particular, our use of invariants from this
prior work. However, we are not aware of related work that discusses
when coordination is strictly \textit{required} to enforce a given set
of invariants. Moreover, our practical focus here is primarily
oriented towards invariants found in SQL and from modern applications.

In this work, we provide a necessary and sufficient condition for
safe, coordination-free execution. In contrast with many of the
conditions above (esp. commutativity and monotonicity), we explicitly
require more information from the application in the form of
invariants (Kung and Papadimitriou~\cite{kung1979optimality} suggest
this is information is \textit{required} for general-purpose non-serializable
yet safe execution.)  When invariants are unavailable, many of these
more conservative approaches may still be applicable. Our use of
analysis-as-design-tool is inspired by this literature---in
particular,~\cite{kohler-commutativity}.

\minihead{Coordination costs} In this work, we determine when
transactions can run entirely concurrently and without
coordination. In contrast, a large number of alternative models
(e.g.,~\cite{garciamolina-semantics,korth-serializability,isolation-semantics,local-verification,kemme-si-ic,aiken-confluence,laws-order})
assume serializable or linearizable (and therefore coordinated)
updates to shared state. These assumptions are standard (but not
universal~\cite{ec-txns}) in the concurrent programming
literature~\cite{schneider-concurrent,laws-order}. (Additionally,
unlike much of this literature, we only consider a single set of
invariants per database rather than per-operation invariants.) For
example, transaction chopping~\cite{chopping} and later
application-aware
extensions~\cite{decomp-semantics,agarwal-consistency} decompose
transactions into a set of smaller transactions, providing increased
concurrency, but in turn require that individual transactions execute
in a serializable (or strict serializable) manner.  This reliance on
coordinated updates is at odds with our goal of coordination-free
execution. However, these alternative techniques are useful in
reducing the duration and distribution of coordination once it is
established that coordination is required.

\minihead{Term rewriting} In term rewriting systems, \iconfluence
guarantees that arbitrary rule application will not violate a given
invariant~\cite{obs-confluence}, generalizing Church-Rosser
confluence~\cite{termrewriting}. We adapt this concept and effectively
treat transactions as rewrite rules, database states as constraint
states, and the database merge operator as a special \textit{join}
operator (in the term-rewriting sense) defined for all
states. Rewriting system concepts---including
confluence~\cite{aiken-confluence}---have previously been integrated
into active database systems~\cite{activedb-book} (e.g., in triggers,
rule processing), but we are not familiar with a concept analogous to
\iconfluence in the existing database literature.

\minihead{Coordination-free algorithms and semantics} Our work is
influenced by the distributed systems literature, where
coordination-free execution across replicas of a given data item has
been captured as ``availability''~\cite{gilbert-cap,queue}. A large
class of systems provides availability via ``optimistic replication''
(i.e., perform operations locally, then
replicate)~\cite{optimistic}. We---like others~\cite{ec-txns}---adopt
the use of the merge operator to reconcile divergent database
states~\cite{bayou} from this literature. Both traditional database
systems~\cite{adya-isolation} and more recent
proposals~\cite{redblue-new, redblue} allow the simultaneous use of
``weak'' and ``strong'' isolation; we seek to understand \textit{when}
strong mechanisms are needed rather than an optimal implementation of
either. Unlike ``tentative update'' models~\cite{sagas}, we do not
require programmers to specify compensatory actions (beyond merge,
which we expect to typically be generic and/or system-supplied) and do
not reverse transaction commit decisions. Compensatory actions could
be captured under \iconfluence as a specialized merge procedure.

The CAP Theorem~\cite{gilbert-cap,pacelc} recently popularized the
tension between strong semantics and coordination and pertains to a
specific model (linearizability). The relationship between
serializability and coordination requirements has also been well
documented in the database literature~\cite{davidson-survey}. We
recently classified a range of weaker isolation models by
availability, labeling semantics achievable without coordination as
``Highly Available Transactions''~\cite{hat-vldb}. Our research here addresses
when particular \textit{applications} require coordination.

In our evaluation, we make use of our recent RAMP
transaction algorithms~\cite{ramp-txns}, which guarantee
coordination-free, atomically visible updates. RAMP transactions are
an \textit{implementation} of \iconfluent semantics (i.e., Read Atomic
isolation, used in our implementation for foreign key constraint
maintenance). Our focus in this paper is \textit{when} RAMP
transactions (and any other coordination-free or \iconfluent semantics)
are appropriate for applications.

\minihead{Summary} The \iconfluence property is a necessary and
sufficient condition for safe, coordination-free execution. Sufficient
conditions such as commutativity and monotonicity are useful in
reducing coordination overheads but are not always necessary. Here, we
explore the fundamental limits of coordination-free execution. To do
so, we explicitly consider a model without synchronous
communication. This is key to scalability: if, by default, operations
must contact a centralized validation service, perform atomic updates
to shared state, or otherwise communicate, then scalability will be
compromised. Finally, we only consider a single set of invariants for
the entire application, reducing programmer overhead without affecting
our \iconfluence results.

\ifextended

\section{Discussion and Future Work}
\label{sec:discussion}
\label{sec:futurework}

In this paper, we have focused on the problem of recognizing when it
is possible to avoid coordination. Here, we discuss extensions to our
approaches and outline areas for future work.

\minihead{\iconfluence as design tool} As we have discussed, if a
transaction is \iconfluent with respect to an invariant, there exists
a coordination-free algorithm for safely executing it.  For example, we
used an early version of \iconfluence analysis in the development of
our RAMP transactions: coordination-free, atomically visible
transactions across multiple partitions that are useful in several
other use cases like foreign key constraint
maintenance~\cite{ramp-txns}. As we showed in
Section~\ref{sec:bcc-practice}, insertion and cascading deletes are
\iconfluent under foreign key constraints, so, when seeking a highly
concurrent algorithm for this use case, we knew the search was not in
vain: \iconfluence analysis indicated there existed at least one safe,
coordination-free mechanism for the task. We see (and have continued
to use) the \iconfluence property as a useful tool in designing new
algorithms, particularly in existing, well-specified applications and
use cases (e.g., B-tree internals, secondary indexes).

\minihead{Amortizing coordination} We have analyzed conflicts on a
per-transaction basis, but it is possible to amortize the overhead of
coordination across multiple transactions. For example, the Escrow
transaction method~\cite{escrow} reduces coordination by allocating a
``share'' of non-\iconfluent operations between multiple
processes. For example, in a bank application, a balance of $\$100$
might be divided between five servers, such that each server can
dispense $\$20$ without requiring coordination to enforce a
non-negative balance invariant (servers can coordinate to ``refresh''
supply). In the context of our \cfreedom analysis, this is
similar to limiting the branching factor of the execution trace to a
finite factor. Adapting Escrow and alternative time-, versioned-, and
numerical- drift-based models~\cite{epsilon-divergence} is a promising
area for future work.

\minihead{System design} The design of full coordination-avoiding
database systems raises several interesting questions. For example,
given a set of \iconfluence results as in
Table~\ref{table:invariants}, does a coordination-avoiding system have
to know all queries in advance, or can it dynamically employ
concurrency primitives as queries are submitted? (Early experiences
suggest the latter.)  Revisiting heuristics- and statistics-based
query planning, specifically targeting physical layout,
choice of concurrency control, and recovery mechanisms appears
worthwhile. How should a system handle invariants that may change over
time? Is SQL the right target for language analysis? We view these
pragmatic questions as exciting areas for future work.

\fi

\section{Conclusion}
\label{sec:conclusion}

ACID transactions and associated strong isolation levels dominated the
field of database concurrency control for decades, due in large part
to their ease of use and ability to automatically guarantee
application correctness criteria. However, this powerful abstraction
comes with a hefty cost: concurrent transactions must coordinate in
order to prevent read/write conflicts that could compromise
equivalence to a serial execution. At large scale and, increasingly,
in geo-replicated system deployments, the coordination costs
necessarily associated with these implementations produce significant
overheads in the form of penalties to throughput, latency, and
availability. In light of these trends, we developed a formal
framework, called \fullnameconfluence, in which application invariants
are used as a basis for determining if and when coordination is
strictly necessary to maintain correctness. With this framework, we
demonstrated that, in fact, many---but not all---common database
invariants and integrity constraints are actually achievable without
coordination. By applying these results to a range of actual
transactional workloads, we demonstrated an opportunity to avoid
coordination in many cases that traditional serializable mechanisms
would otherwise coordinate. The order-of-magnitude performance
improvements we demonstrated via coordination-avoiding concurrency
control strategies provide compelling evidence that invariant-based
coordination avoidance is a promising approach to meaningfully scaling
future data management systems.

{ \iftrue \minihead{Acknowledgments} The authors would like to thank
  Peter Alvaro, Neil Conway, Shel Finkelstein, and Josh Rosen for
  helpful feedback on earlier versions of this work, Dan Crankshaw,
  Joey Gonzalez, Nick Lanham, and Gene Pang for various engineering
  contributions, and Yunjing Yu for sharing the Bobtail dataset. This
  research is supported in part by NSF CISE Expeditions Award
  CCF-1139158, LBNL Award 7076018, DARPA XData Award
  FA8750-12-2-0331, the NSF Graduate Research Fellowship (grant
  DGE-1106400), and gifts from Amazon Web Services, Google, SAP, the
  Thomas and Stacey Siebel Foundation, Adobe, Apple, Inc., Bosch,
  C3Energy, Cisco, Cloudera, EMC, Ericsson, Facebook, GameOnTalis,
  Guavus, HP, Huawei, Intel, Microsoft, NetApp, Pivotal, Splunk,
  Virdata, VMware, and Yahoo!. \fi \fi}

\scriptsize
\linespread{.98}
\selectfont

\bibliography{bcc} \bibliographystyle{abbrv}

\balance

\linespread{1}
\selectfont

\ifextended

\normalsize

\section*{APPENDIX A: Experimental Details}

\minihead{Microbenchmark experiment description} We implemented traditional two-phase locking and an optimized variant of two-phase locking on the experimental prototype described in Section~\ref{sec:evaluation}.

In two-phase locking, each client acquires locks one at a time, requiring a full round trip time (RTT) for every lock request. For an $N$ item transaction, locks are held for $2N+1$ message delays (the $+1$ is due to broadcasting the unlock/commit command to the participating servers).

Our optimized two-phase locking only uses one message delay (half RTT) to perform each lock request: the client specifies the entire set of items it wishes to modify at the start of the transaction (in our implementation, the number of items in the transaction and the starting item ID), and, once a server has updated its respective item, the server forwards the remainder of the transaction to the server responsible for the next write in the transaction. For an $N$-item transaction, locks are only held for $N$ message delays (the final server both broadcasts the unlock request to all other servers and also notifies the client), while a $1$-item transaction does not require distributed locking.

To avoid deadlock (which was otherwise common in this high-contention microbenchmark), our implementation totally orders any lock requests according to item and executes them sequentially (e.g., lock server $1$ then lock server $2$). Our implementation also piggybacks write commands along with lock requests, further avoiding message delays. Unlike the locking implementation used in Section~\ref{sec:evaluation}, since we are only locking one item per server, our microbenchmark code does not use a dynamic lock manager and instead associates a single lock with each item; this should further lower locking overheads.

On each server, our lock implementation uses Java \texttt{ReentrantLock}, which unfortunately means that, for all but $1$-item optimized 2PL, our implementation was unable to used fixed-size thread pools (in contrast with our Scala \texttt{Future}-based coordination-free runtime). Nevertheless, we do not believe that our locking implementation is the actual bottleneck in the distributed setting: coordination is.

We partitioned eight in-memory items (integers) across eight \texttt{cr1.8xlarge} Amazon EC2 instances with clients located on a separate set of \texttt{cr1.8xlarge} instances. Figure~\ref{fig:micro} reported in Section~\ref{sec:motivation} depicts results for the coordination-free implementation and the optimized two-phase locking case. Figure~\ref{fig:micro-all} in this second depicts all three algorithms. Unsurprisingly, two-phase locking performs worse than optimized two-phase locking, but both incur substantial penalties due to coordination delay over the network.

\begin{figure}
\includegraphics[width=\columnwidth]{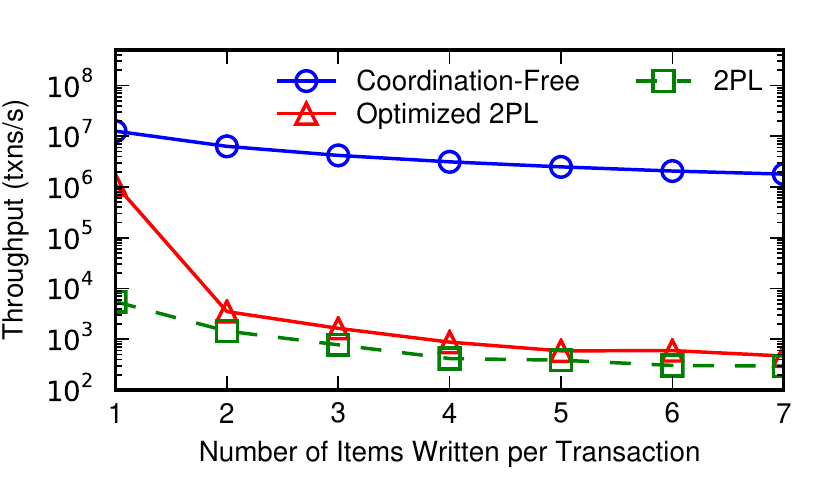}
\caption{Additional performance details for microbenchmark performance of conflicting and non-conflicting
  transactions.}
\label{fig:micro-all}
\end{figure}

\minihead{Trace-based simulation description} We simulate traditional two-phase commit and decentralized two-phase commit, using network models derived from existing studies. Our simulation is rather straightforward, but we make several optimizations to improve the throughput of each algorithm. First, we assume that transactions are pipelined, so that each server can \texttt{prepare} immediately after it has \texttt{commit}ted the prior transaction. Second, our pipelines are ideal in that we do not consider deadlock: only one transaction \texttt{prepare}s at a given time. Third, we do not consider the cost of local processing of each transaction: throughput is determined entirely by communication delay.

While this study is based solely on reported latencies, deployment
reports corroborate our findings. For example, Google's F1 uses
optimistic concurrency control via WAN with commit latencies of $50$
to $150$~ms. As the authors discuss, this limits throughput to between
$6$ to $20$ transactions per second per data
item~\cite{f1}. Megastore's average write latencies of $100$ to
$400$~ms suggest similar throughputs to those that we have
predicted~\cite{megastore}. Again, \textit{aggregate} throughput may
be greater as multiple 2PC rounds for disjoint sets of data items may
safely proceed in parallel. However, \textit{worst-case} access
patterns will indeed greatly limit scalability.

\section*{APPENDIX B: \iconfluence proof}

\begin{lemma}\label{lemma:force} 
Given a set of transactions $T$ and invariants $I$, a globally $I$-valid, coordination-free, transactionally available, and convergent system is able to produce any $I$-$T$-reachable state $S_i$.
 \end{lemma}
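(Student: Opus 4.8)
The plan is to prove the lemma by induction on the structure of the witness that establishes $S_i$ as an $I$-$T$-reachable state. By definition, such a witness is a finite, partially ordered collection of transaction and merge invocations, each of whose intermediate output states is $I$-valid and which culminates in $S_i$. I would view this witness as a DAG whose nodes are states (the sink being $S_i$, the sources being copies of $D_0$), whose unary edges are transaction invocations $t(S_a)$, and whose binary edges are merge invocations $S_a \sqcup S_b$. The induction is on the number of invocations in the witness, and the inductive claim is the slightly strengthened statement that the system can drive some server into each such state while keeping every replica $I$-valid throughout.

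For the base case, the witness has no invocations and $S_i = D_0$; the system is initialized in $D_0$, which is $I$-valid by assumption, so the claim holds trivially. For the inductive step, I would isolate the final invocation producing $S_i$. If it is a transaction $t \in T$ applied to an ancestor $S_a$, then $S_a$ is produced by a strictly smaller witness and, by the inductive hypothesis, the system can place $S_a$ on some server. I would then have a client execute $t$ against a replica initialized to $S_a$. Coordination-freedom guarantees that $t$'s progress depends only on this replica state, so no concurrent activity interferes, and because $S_i = t(S_a)$ is $I$-valid, transactional availability forces $t$ to commit rather than abort for invariant reasons. The committed replica then holds $S_i$.

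If instead the final invocation is a merge $S_i = S_a \sqcup S_b$, then by the inductive hypothesis the system can realize $S_a$ on one server and $S_b$ on a second. I would then invoke convergence: in the absence of further writes and of indefinite delays, the two servers exchange state, and since merge is set union with $D_0 \sqcup D = D$, the converged state is exactly $S_a \sqcup S_b = S_i$, which is again $I$-valid and hence consistent with global validity.

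The main obstacle I anticipate is the commit/abort argument, i.e., turning ``$S_i$ is \emph{reachable}'' into ``the system \emph{can} commit the relevant transactions.'' Transactional availability only forbids aborts that are not self-induced or invariant-driven, so I must argue both that committing $t$ never violates $I$ (which follows because every intermediate state in the witness is $I$-valid) and that $t$ does not take an explicit self-abort branch (which follows because, by coordination-freedom, $t$'s decision is a function of its replica state $S_a$, and in the witness $t$ committed on exactly that state). A secondary subtlety is non-determinism: a transaction such as unique-value generation may admit several outputs, so I would emphasize that the lemma asserts only that \emph{some} execution produces $S_i$, and that the system can select the same non-deterministic branch taken in the witness. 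Finally, a little care is needed to ensure ancestors can be staged on distinct servers so that the merge step is realizable without coordination.
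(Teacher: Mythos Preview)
Your proposal is correct and follows essentially the same approach as the paper: the paper's \textsc{replay} procedure is a topological traversal of the witness history (equivalent to your structural induction on the DAG), and it invokes exactly the same three mechanisms you identify---coordination-freedom for replica-local determinism, transactional availability to force commits of $I$-valid transformations, and pairwise convergence (induced by controlling network partitions) to force merges. The paper is slightly more explicit than you are about the mechanics of staging ancestor states on fresh, distinct servers (it partitions a target server together with a server holding $S_0$ and waits for convergence to ``copy'' the state), which is precisely the detail you flagged as needing ``a little care.''
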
 \begin{proof}{Lemma~\ref{lemma:force}}
Let $\alpha_i$ represent a partially ordered sequence of transactions $T_i$ and merge procedure invocations $M_i$ (call this a \textit{history}) starting from $S_0$ that produces $S_i$,

We $\textsc{replay}$ the history $\alpha_i$ on a set of servers as follows. Starting from the initial state $S_0$, we traverse the partial order according to what amounts to a topological sort. Initially, we mark all operations (transactions or merges) in $\alpha_i$ as \textit{not done}. We begin by executing all transactions $T_i$ in $\alpha_i$ that have no predeceding operations in $\alpha_i$. For each transaction $t \in T_i$, we execute $t$ on a server that is unable to communicate with any other server. Upon transaction commit, we merge each replica's modifications into a separate server.  (Recall that, because $S_i$ is $I$-$T$-reachable, each transaction in $\alpha_i$ is an $I$-valid transformation and must either eventually commit or abort itself to preserve transactional availability, and, due to coordination-freedom, the result of the execution is dependent solely on its input---in this case, $S_0$.) We subsequently mark each $t \in T_i$ as \textit{done} in $\alpha_0$ and denote the resulting server as $s_t$.\footnote{Recall from Section~\ref{sec:model} that we consider arbitrary sets of servers. We could likely (at least in certain cases) be more parsimonius with our use of servers in this proof at some cost to complexity.} 

Next, we repeatedly select an operation $o_i$ from $\alpha_i$ that is marked as \textit{not done} but whose preceding operations are all marked as \textit{done}.

If $o_i$ is a transaction with preceding operation $o_j$ on corresponding server $s_j$, we partition $s_j$ and $s_i$ and another server containing state $S_0$ such that $s_j$ and $s_i$ can communicate with each other but cannot communicate with any other server. Under convergent execution, $s_j$ and $s_i$ must eventually contain the same state, merged to $s_j$ (given that $s_j \sqcup S_0$ is defined in our model to be $s_j$). Following convergence, we partition $s_j$ and $s_i$ so they cannot communicate. We subsequently execute $o_i$ on $s_j$. Again, $o_i$ must either commit or abort itself to preserve transactional availability, and its behavior is solely dependent on its input due to coordination-freedom.

If $o_i$ is a merge procedure with preceding operations $o_j$ and $o_k$ on corresponding servers $s_j$ and $s_k$, we produce servers $s_{j'}$ and $s_{k'}$ as above, by partitioning $s_{j'}$ and $s_{j}$ and, respectively, $s_{k}$ and $s_{k'}$, waiting until convergence, then repartitioning each. Subsequently, we place $s_{j'}$ and $s_{k'}$ in the same partition, forcing the merge ($o_i$) of these states via the convergence requirement.

When all operations in $\alpha_i$ are marked as \textit{done}, the final operation we have performed will produce server containing state $S_i$. We have effectively (serially) traversed the history, inducing the series of transactions by triggering partitions while requiring commits due to transactional availability and merges due to our pair-wise convergence.
\end{proof}

We proceed to prove Theorem~\ref{theorem:necessary} from Section~\ref{sec:ic-result}.

\begin{proof}{Theorem~\ref{theorem:necessary}}
($\Leftarrow$) We begin with the simpler proof, which is by
  construction. Assume a set of transactions $T$ are \iconfluent with
  respect to an invariant $I$. Consider a system in which each server
  executes the transactions it receives against a replica of its current state and checks whether or not the resulting state is $I$-valid. If
  the resulting state is $I$-valid, the replica commits the
  transaction and its mutations to the state. If not, the replica
  aborts the transaction. Servers opportunistically exchange copies of
  their local states and merge them. No individual replica will
  install an invalid state upon executing transactions, and, because
  $T$ is \iconfluent under $I$, the merge of any two $I$-valid replica
  states from individual servers (i.e., $I$-$T$-reachable states) as
  constructed above is $I$-valid. Therefore, the converged database
  state will be $I$-valid. Transactional availability, convergence,
  and global $I$-validity are all maintained via coordination-free
  execution.

($\Rightarrow$) Assume a system $M$ guarantees globally $I$-valid
  operation for set of transactions $T$ and invariant $I$ with
  \cfreedom, transactional availability, and convergence, but $T$ is
  not $I$-confluent. Then there exist two $I$-$T$-reachable states $S_1$ and $S_2$ with common ancestor $I$-$T$-reachable state $S_a$ such that, by definition, $I(S_1)$ and $I(S_2)$ are true, but $I(S_1 \sqcup S_2)$ is false. \footnote{We may be able to apply Newman's
  lemma and only consider single-transaction divergence (in the case 
  of convergent and therefore ``terminating'' 
  executions)~\cite{obs-confluence,termrewriting}, but this is not 
  necessary for our results.\label{fn:newman-note}}

Consider two executions of system $M$, $\epsilon_1$ and $\epsilon_2$. In each execution, we begin by forcing $M$ to produce a server containing $S_c$ (via \textsc{replay} in Lemma~\ref{lemma:force}). In $\epsilon_1$, we subsequently \textsc{replay} the history $\alpha_1$ starting from $S_c$. In $\epsilon_2$, we subsequently \textsc{replay} the history $\alpha_2$ starting from $S_c$. Call $T_{f1}$ and $T_{f2}$ the final (set of) transactions that produced each of $S_1$ and $S_2$ (that is, the set of transactions in each execution that are not followed by any other transaction). Under $\alpha_1$ and $\alpha_2$, all transactions in each of $T_{f1}$ and $T_{f2}$ will have committed to maintain transactional availability, their end result will be equivalent to the result in $\alpha_1$ and $\alpha_2$ due to the coordination-freedom property, and $S_1$ and $S_2$ are both $I$-valid, by assumption.

We now consider a third execution, $\alpha_3$. $\alpha_3$ proceeds to independently \textsc{replay} $\alpha_1$ and $\alpha_2$ but does not execute or proceed further in the partial order than any element of $T_{f1}$ or $T_{f2}$; we consider these specially:

Once $\alpha_3$ has forced $M$ to \textsc{replay} other operations, we force it to \textsc{replay} these final operations beginning from $T_{f1}$ and $T_{f2}$. If we \textsc{replay} these transactions and $M$ commits them, we can \textsc{replay} the remainder of the operations in $\alpha_1$ and $\alpha_2$. In this case, due to the coordination-freedom property, $M$ will produce two servers $s_i$ and $s_j$ containing states $S_1$ and $S_2$. When we partition $s_i$ and $s_j$ such that they can communicate with each other but cannot communicate with any other servers, $s_i$ and $s_j$ must eventually converge, violating global $I$-validity. On the other hand, if $M$ aborts one or more of the transactions in $T_{f1}$ and $T_{f2}$, $M$ will not produce $s_i \sqcup s_j$, but, from the perspective of each server $s_{p1}$ executing a transaction in $T_{f1}$, this execution is indistinguishable from $\alpha_1$, and, from the perspective of each server $s_{p2}$ executing a transaction in $T_{f2}$, is indistinguishable from $\alpha_2$, a contradiction.

Therefore, to preserve transactional availability, $M$ must sacrifice one of global validity (by allowing the invalid merge), convergence (by never   merging), or \cfreedom (by requiring changes to transaction behavior).
\end{proof}

\section*{APPENDIX C: \iconfluence Analysis}

In this section, we more formally demonstrate the \iconfluence of invariants and operations discussed in Section~\ref{sec:merge}. Our goals in this section are two-fold. First, we have found the experience of formally proving \iconfluence to be instructive in understanding these combinations (beyond less formal arguments made in the body text for brevity and intuition). Second, we have found \iconfluence proofs to take on two general structures that, via repetition and in and variations below, may prove useful to the reader. In particular, the structure of our \iconfluence proofs takes one of two forms:

\begin{itemize}

\item To show a set of transactions are not \iconfluent with respect to an invariant $I$, we use proof by counterexample: we present two $I$-$T$-reachable states with a common ancestor that, when merged, are not $I$-valid. 

\item To show a set of transactions are \iconfluent with respect to an invariant $I$, we use proof by contradiction: we show that, if a state $S$ is not $I$-valid, merging two $I$-$T$-reachable states with a common ancestor state to produce $S$ implies either one or both of $S_1$ or $S_2$ must not be $I$-valid.

\end{itemize}

These results are not exhaustive, and there are literally infinite combinations of invariants and operations to consider. Rather, the seventeen examples below serve as a demonstration of what can be accomplished via \iconfluence analysis.

Notably, the negative results below use fairly simple histories consisting of a single transaction divergence. As we hint in Footnote~\ref{fn:newman-note}, it is possible that a large class of relevant invariant-operation pairs only depend on single-transaction divergence. Nevertheless, we decided to preserve the more general formulation of \iconfluence (accounting for arbitrary $I$-$T$-reachable states) to account for more pathological (perhaps less realistic, or, if these results are any indication, less commonly encountered) behaviors that only arise during more complex divergence patterns.

We introduce additional formalism as necessary. To start, unless otherwise specified, we use the set union merge operator. We denote version $i$ of item $x$ as $x_i$ and a write of version $x_i$ with value $v$ as $w(x_i=v)$.

\begin{claim}[Writes are \iconfluent with respect to per-item equality constraints]
\label{claim:eq-proof}
Assume writes are not \iconfluent with respect to some per-item equality constraint $i=c$, where $i$ is an item and $c$ is a constant. By definition, there must exist two $I$-$T$-reachable states $S_1$ and $S_2$ with common ancestor state such that $I(S_1) \rightarrow true$ and $I(S_2) \rightarrow true$ but $I(S_1) \sqcup S_2) \rightarrow false$; therefore, there exists a version $i_n$ in $S_1 \sqcup S_2$ such that $i_n \neq c$, and, under set union, $i_n \in S_1$, $i_n \in S_2$, or both. However, this would imply that $I(S_1) \rightarrow false$ or $I(S_2) \rightarrow false$ (or both), a contradiction.
\end{claim}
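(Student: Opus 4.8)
The plan is to follow the positive-result template from Appendix C, namely proof by contradiction. I would fix a per-item equality constraint $I$ asserting that item $i$ must hold the constant value $c$, and assume for contradiction that writes are \emph{not} \iconfluent with respect to $I$. By the definition of \iconfluence, this hypothesis hands me two $I$-$T$-reachable states $S_1$ and $S_2$ that share a common ancestor, with $I(S_1)$ and $I(S_2)$ both true, yet $I(S_1 \sqcup S_2)$ false.

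The core of the argument is to turn the failure of $I(S_1 \sqcup S_2)$ into a concrete, relocatable witness. Since $I$ forbids any version of $i$ whose value differs from $c$, there must be some offending version $i_n \in S_1 \sqcup S_2$ with $i_n \neq c$. The decisive step is then to exploit the concrete shape of the merge operator: because merge here is set union, and set union neither fabricates new versions nor mutates existing ones, every element of $S_1 \sqcup S_2$ already lives in $S_1$, in $S_2$, or in both. Applying this to $i_n$ forces $i_n \in S_1$ or $i_n \in S_2$, and whichever operand contains it is itself $I$-invalid, contradicting $I(S_1)$ or $I(S_2)$.

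I do not expect a genuine obstacle in this claim; the only point that needs care is pinning down precisely what the predicate $I$ asserts in the multiversion, set-union model (that \emph{no} version of $i$ carries a value other than $c$), so that exhibiting a single offending version in the union really does falsify $I$. It is worth flagging that, in contrast to the uniqueness and foreign-key claims, the common-ancestor hypothesis is not actually used here: the violation is ``monotone,'' in that an offending version must physically pre-exist in one of the operands, so set union cannot synthesize a fresh violation from two individually valid inputs. The same reasoning in fact survives for any non-destructive merge, not just set union, which is why this serves as the warm-up case before the subtler invariants.
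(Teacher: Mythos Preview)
Your proposal is correct and follows essentially the same argument as the paper: assume non-\iconfluence, extract an offending version $i_n \neq c$ from $S_1 \sqcup S_2$, and use the fact that set-union merge cannot create new versions to push $i_n$ back into one of $S_1$ or $S_2$, contradicting its $I$-validity. Your additional remarks---that the common-ancestor hypothesis is unused here and that the argument extends to any non-destructive merge---are accurate observations that go slightly beyond what the paper states but do not alter the proof.
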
 

\begin{claim}[Writes are \iconfluent with respect to per-item inequality constraints] \label{claim:neq} The proof follows almost identically to the proof of Claim, but for an invariant of the form $i\neq c$~\ref{claim:eq-proof}.  \end{claim}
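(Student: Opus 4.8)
The plan is to mirror the proof of Claim~\ref{claim:eq-proof} almost verbatim, exchanging the equality predicate for the inequality predicate, since the structural property of set-union merge that drives the argument is indifferent to which per-item predicate we impose. I would argue by contradiction: suppose writes are \emph{not} \iconfluent with respect to some per-item inequality constraint of the form $i \neq c$, where $i$ names an item and $c$ is a constant. Then the definition of \iconfluence hands us two $I$-$T$-reachable states $S_1$ and $S_2$ sharing a common ancestor, with $I(S_1)$ and $I(S_2)$ both true but $I(S_1 \sqcup S_2)$ false.

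Next I would unpack what a violation of an inequality invariant looks like, which is the only place the argument differs from the equality case. Under $i \neq c$, the predicate fails exactly when the forbidden value $c$ does appear: so there must be a version $i_n \in S_1 \sqcup S_2$ with $i_n = c$. Because the merge operator is set union, I would then invoke the key observation that no element of $S_1 \sqcup S_2$ is manufactured during merge---every version in the union originates in $S_1$, in $S_2$, or in both. Hence $i_n \in S_1$ or $i_n \in S_2$, which forces $I(S_1)$ or $I(S_2)$ to be false, contradicting the assumption and completing the proof.

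The real content---and the point I would emphasize---is that set-union merge is non-destructive and monotone: it neither removes existing versions nor synthesizes new ones, so any per-item violation present in the merged state is already witnessed in one of the summands. This is precisely the property that carried the equality case, and since per-item inequality constraints are likewise predicates local to a single item's version, nothing else about the argument changes. The main ``obstacle'' here is therefore not technical but one of stating the dual correctly: being careful that for $i \neq c$ the violating condition is $i_n = c$ (rather than $i_n \neq c$, as in Claim~\ref{claim:eq-proof}), and confirming that the witness-localization step relies only on set union and not on the particular form of the per-item predicate.
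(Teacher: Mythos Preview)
Your proposal is correct and matches the paper's approach essentially verbatim: the paper's own proof is nothing more than a pointer back to Claim~\ref{claim:eq-proof} with the invariant swapped to $i \neq c$, and the informal argument in Section~\ref{sec:bcc-practice} spells out exactly the contradiction you give---a violating version $i_n = c$ in $S_1 \sqcup S_2$ must, by set-union merge, already lie in $S_1$ or $S_2$. Your care in noting that the violating condition flips from $i_n \neq c$ to $i_n = c$ is the only subtlety, and you handled it correctly.
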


\begin{claim}[Writing arbitrary values is not \iconfluent with respect to multi-item uniqueness constraints]
\label{claim:sunique}
Consider the following transactions:
\begin{align*}
T_{1u}&\coloneqq w(x_a=v);~commit\\
T_{2u}&\coloneqq w(x_b=v);~commit
\end{align*}
and uniqueness constraint on records:
$$I_u(D) = \{\textrm{values in D are unique}\}$$
Now, an empty database trivially does not violate uniqueness constraints ($I_u(D_s=\{\})\rightarrow true$), and adding individual versions to the separate empty databases is also valid:
\begin{align*}
T_{1u}(\{\})&=\{x_a=v\},~I_u(\{x_a=v\}) \rightarrow true\\
T_{2u}(\{\})&=\{x_b=v\},~I_u(\{x_b=v\}) \rightarrow true
\end{align*}
However, merging these states results in invalid state:
$$I_u(\{x_a=v\}\sqcup \{x_b=v\} = \{x_a=v, x_b=v\}) \rightarrow false$$
Therefore, $\{T_{1u}, T_{2u}\}$ is not \iconfluent under $I_s$.
\end{claim}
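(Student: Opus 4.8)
The plan is to prove a \emph{negative} \iconfluence result, so following the first of the two proof templates laid out at the start of this appendix, I would proceed by counterexample: exhibit two $I$-$T$-reachable states that share a common ancestor but whose merge violates the uniqueness constraint. First I would fix the invariant $I_u(D)$ asserting that all values appearing in $D$ are distinct, and choose as the shared ancestor the empty initial state $D_s = \{\}$, which is vacuously $I_u$-valid since it records no values at all.

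Next I would introduce two single-write transactions that each insert the \emph{same} value $v$ but into \emph{distinct} items, say $T_{1u}$ writing $x_a = v$ and $T_{2u}$ writing $x_b = v$. Applying each transaction independently to the ancestor $\{\}$ yields the states $\{x_a=v\}$ and $\{x_b=v\}$. I would then verify that each is $I$-$T$-reachable (a single transaction from $D_s$, with no intermediate states to validate) and that each is $I_u$-valid, since a lone recorded value trivially contains no duplicate. This discharges exactly the two hypotheses the \iconfluence definition demands: two reachable states with a common ancestor.

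The final step is to evaluate the merge under the set-union operator: $\{x_a=v\} \sqcup \{x_b=v\} = \{x_a=v, x_b=v\}$, which now records the value $v$ twice and therefore fails $I_u$. Since both operands are $I_u$-valid but their merge is not, $\{T_{1u}, T_{2u}\}$ cannot be \iconfluent with respect to $I_u$, completing the argument.

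The construction itself is elementary arithmetic of sets; the one point demanding care is the \emph{common-ancestor} requirement baked into the definition of \iconfluence (and flagged in the footnote accompanying that definition). A careless counterexample that merely posited two unrelated databases each holding $v$ would not suffice, because the definition explicitly excludes merging states that could not both descend from a single reachable state. I would therefore take care to emphasize that both states arise by executing one transaction from the genuinely shared empty ancestor $D_s$, so that the counterexample is \emph{admissible} rather than pathological, and the non-\iconfluence conclusion is sound.
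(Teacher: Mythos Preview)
Your proposal is correct and follows essentially the same approach as the paper: the paper's proof is the counterexample embedded directly in the claim statement, and you reproduce exactly that construction (empty ancestor $D_s=\{\}$, two single-write transactions assigning the same value $v$ to distinct items, each resulting state $I_u$-valid, set-union merge violating $I_u$). Your additional emphasis on verifying the common-ancestor requirement is a nice touch but does not depart from the paper's argument.
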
 

For the next proof, we consider a model as suggested in Section~\ref{sec:bcc-practice} where replicas are able to generate unique (but not arbitrary (!)) IDs (in the main text, we suggested the use of a replica ID and sequence number). In the following proof, to account for this non-deterministic choice of unique ID, we introduce a special $nonce()$ function and require that, $nonce()$ return unique values for each replica; that is, $\sqcup$ is not defined for replicas on which independent invocations of $nonce()$ return the same value.

\begin{claim}[Assigning values by $nonce()$ is \iconfluent with respect to multi-item uniqueness constraints]
\label{claim:nunique}

Assume that assigning values by $nonce()$ is not \iconfluent with respect to some multi-item uniqueness invariant:
$$I(D) = \forall c \in dom(D), \{|\{x \in D \mid x = c\}| \leq 1 \}$$
By definition, there must exist two $I$-$T$-reachable states with a common ancestor reached by executing nonce-generating transactions (of the form $T_i=[w(x_i = nonce())]$), $S_1$ and $S_2$ such that $I(S_1) \rightarrow true$ and $I(S_2) \rightarrow true$ but $I(S_1 \sqcup S_2) \rightarrow false$.

Therefore, there exist two versions $i_a, i_b$ in $S_1 \sqcup S_2$ such that $i_a$ and $i_b$ (both generated by $nonce()$) are equal in value. Under set union, this means $i_a \in S_1$ and $i_b \in S_2$ ($i_a$ and $i_b$ both cannot appear in $S_1$ or $S_2$ since it would violate those states' $I$-validity). Because replica states grow monotonically under set merge and $S_1$ and $S_2$ differ, they must be different replicas. But $nonce()$ cannot generate the same values on different replicas, a contradiction. \end{claim}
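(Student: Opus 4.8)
The plan is to establish \iconfluence by contradiction, following the second of the two proof templates described above. First I would assume that nonce-assignment transactions of the form $T_i = [w(x_i = nonce())]$ are \emph{not} \iconfluent with respect to the multi-item uniqueness invariant $I(D) = \forall c \in dom(D),\ |\{x \in D \mid x = c\}| \leq 1$. By the definition of \iconfluence, this yields two $I$-$T$-reachable states $S_1$ and $S_2$ sharing a common ancestor $S_a$ with $I(S_1)$ and $I(S_2)$ true but $I(S_1 \sqcup S_2)$ false. Because the merge operator is set union, the only way $I$ can fail on $S_1 \sqcup S_2$ is for it to contain two distinct versions $i_a, i_b$ that carry the same value.

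Next I would localize those offending versions. Since $S_1 \sqcup S_2 = S_1 \cup S_2$, each of $i_a, i_b$ lies in $S_1$, in $S_2$, or in both. If both lay in $S_1$ then $I(S_1)$ would already be false (two equal-valued versions), and symmetrically for $S_2$; hence one of them, say $i_a$, must lie in $S_1$ and the other, $i_b$, in $S_2$, with the two not jointly present in either operand. I would also rule out that both descend unchanged from the common ancestor $S_a$: since $S_a$ is itself $I$-valid (being $I$-$T$-reachable or $D_0$), it cannot already contain two equal-valued versions, so at least one of $i_a, i_b$ was freshly produced by a $nonce()$ invocation along the divergent history leading to $S_1$ or $S_2$.

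The crux is then to argue that $i_a$ and $i_b$ were generated on \emph{distinct} replicas. I would use the fact that set-union merge is monotone and non-destructive: versions are only ever added, never removed, so the histories producing $S_1$ and $S_2$ branch from $S_a$ and accumulate their fresh nonces on separate replicas that have not yet communicated. An equal-valued pair $i_a \in S_1$, $i_b \in S_2$ therefore corresponds to two $nonce()$ calls on different replicas. Finally I would invoke the model's defining assumption about $nonce()$---that independent invocations return distinct values across replicas, and that $\sqcup$ is simply undefined on any pair of replicas where they do not---to conclude that no such equal-valued pair can arise, contradicting the supposition and hence establishing \iconfluence.

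I expect the main obstacle to be this third step: pinning down, with the informal notion of ``replica,'' exactly why two equal-valued nonces must have come from different replicas and why that is impossible. The delicate points are, first, excluding the case that $i_a$ and $i_b$ both originate from the shared ancestor (where they would in fact be the \emph{same} version rather than a genuine duplicate), and, second, cleanly transferring the per-replica uniqueness guarantee of $nonce()$ through the branching-and-merging structure of an $I$-$T$-reachable history. Both hinge on using the model's stipulation that merge is undefined when independent $nonce()$ calls collide, which is what ultimately discharges the contradiction.
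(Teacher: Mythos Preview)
Your proposal is correct and follows essentially the same contradiction template as the paper's proof: localize the duplicate versions to one in $S_1$ and one in $S_2$, argue they arose on different replicas, and invoke the $nonce()$ guarantee. If anything, you are more careful than the paper about the edge case where both versions might descend from the common ancestor and about articulating why the ``different replicas'' step is the delicate one; the paper dispatches that point in a single clause.
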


\begin{claim}[Writing arbitrary values are not \iconfluent with respect to sequentiality constraints]
\label{claim:nseq-insert}
Consider the following transactions:
\begin{align*}
T_{1s}&\coloneqq w(x_a=1);~commit\\
T_{2s}&\coloneqq w(x_b=3);~commit
\end{align*}
and the sequentiality constraint on records:
$$I_s(D) =\{max(r\in D)-min(r\in D) = |D|+1\} \vee \{|D|=0\}$$
Now, $I_s$ holds over the empty database ($I_s(\{\}) \rightarrow true$), while inserting sequential new records into independent, empty replicas is also valid:
\begin{align*}
T_{1s}(\{\})&=\{x_a=1\},~I_u(\{x_a=1\} \rightarrow true\\
T_{2s}(\{\})&=\{x_b=3\},~I_u(\{x_b=3\} \rightarrow true
\end{align*}
However, merging these states results in invalid state:
$$I_s(\{x_a=1\}\sqcup \{x_b=3\} = \{x_a=1, x_b=3\}) \rightarrow false$$
Therefore, $\{T_{1s}, T_{2s}\}$ is not \iconfluent under $I_s$.
\end{claim}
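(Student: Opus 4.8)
The plan is to establish this negative result by counterexample, following the template the appendix fixes for non-\iconfluence proofs: exhibit two $I$-$T$-reachable states that share a common ancestor, each individually valid, whose merge violates the invariant. Concretely, I would take the common ancestor to be the empty database $D_s=\{\}$, which is vacuously sequential, and let two single-write transactions insert non-adjacent values---say $T_{1s}$ writing the value $1$ and $T_{2s}$ writing the value $3$---on independent replicas.

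First I would pin down the sequentiality invariant $I_s$ so that it captures ``the set of stored values forms a contiguous range of integers,'' with the empty set treated as trivially valid. The key property I need is that $I_s$ accepts any singleton (a single value is always trivially contiguous) while rejecting any set that contains a gap. With this in hand, I would check the three required facts in turn: that $I_s(\{\})$ holds; that $T_{1s}(\{\})=\{x_a=1\}$ and $T_{2s}(\{\})=\{x_b=3\}$ are each $I_s$-valid (each is a singleton reached from the valid ancestor by one valid transaction, and hence $I$-$T$-reachable); and that the set-union merge $\{x_a=1\}\sqcup\{x_b=3\}=\{x_a=1, x_b=3\}$ is not $I_s$-valid, since the value $2$ is missing and the range is no longer contiguous. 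Concluding that $\{T_{1s},T_{2s}\}$ is not \iconfluent under $I_s$ is then immediate from the definition.

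The delicate point---really the only place where care is needed---is the precise statement of $I_s$: it must simultaneously admit each intermediate singleton state as valid and reject the gapped union, and it must be phrased in terms of the committed set of versions so that both replicas are genuinely $I$-$T$-reachable from the shared empty ancestor. A formulation comparing $\max$ and $\min$ of the stored values against the cardinality $|D|$ does exactly this, and I would only need to verify that the degenerate empty and singleton cases are handled so that the base state and both post-transaction states pass. Once the invariant is fixed in this way, no further machinery is required: unlike the affirmative claims, there is no contradiction argument to run, so the proof reduces entirely to displaying the witness states and evaluating $I_s$ on each of them.
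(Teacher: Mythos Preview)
Your proposal is correct and mirrors the paper's approach exactly: the paper also exhibits the empty database as common ancestor, applies $T_{1s}$ and $T_{2s}$ independently to obtain the singletons $\{x_a=1\}$ and $\{x_b=3\}$, checks each is $I_s$-valid, and observes that their set-union merge $\{x_a=1,x_b=3\}$ violates $I_s$. Your attention to pinning down $I_s$ so that singletons pass is well placed---the paper's stated formula $\max-\min=|D|+1$ appears to contain a typo (a singleton would fail it), so your instinct to verify the degenerate cases explicitly is the right one.
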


To discuss foreign key constraints, we need some way to \textit{refer} to other records within the database. There are a number of ways of formalizing this; we are not picky and, here, refer to a field $f$ within a given version $x_i$ as $x_i.f$.

\begin{claim}[Insertion is \iconfluent with respect to foreign key constraints]
\label{claim:fk-insert}
  Assume that inserting new records is not \iconfluent with respect to some foreign key constraint $I(D) = \{\forall r_f \in D$ such that $r_f.g \neq null$, $\exists r_t \in D$ such that $r_f.g = r_t.h\}$ (there exists a foreign key reference between fields $g$ and $h$). By definition, there must exist two $I$-$T$-reachable states $S_1$ and $S_2$ with a common ancestor reachable by executing transactions performing insertions such that $I(S_1) \rightarrow true$ and $I(S_2) \rightarrow true$ but $I(S_1 \sqcup S_2) \rightarrow false$; therefore, there exists some version $r_1 \in S_1 \sqcup S_2$ such that $r_1.f \neq null$ but $\nexists r_2 \in S_1 \sqcup S_2$ such that $r_1.g = r_2.h$. Under set union, $r_1$ must appear in either $S_1$ or $S_2$ (or both), and, for each set of versions in which it appears, because $S_1$ and $S_2$ are both $I$-valid, they must contain an $r_3$ such that $r_1.f = r_3.h$. But, under set union, $r_3.h$ should also appear in $S_1 \sqcup S_2$, a contradiction.
\end{claim}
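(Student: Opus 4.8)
The plan is to establish the claim by contradiction, following the ``positive result'' template from the start of this appendix: I would assume the merged state is invalid and derive that one of the two component states must already have been invalid. Concretely, I would suppose for contradiction that insertions are \emph{not} \iconfluent with respect to the foreign key constraint $I$, where $I$ requires that every record $r_f$ with $r_f.g \neq null$ be accompanied by some record $r_t$ with $r_f.g = r_t.h$. By the definition of \iconfluence, this supposition hands me two $I$-$T$-reachable states $S_1$ and $S_2$ sharing a common ancestor, with $I(S_1)$ and $I(S_2)$ both holding but $I(S_1 \sqcup S_2)$ failing.

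Next I would unfold what a violation of $I$ on the merged state means. Failing $I$ produces a ``dangling pointer'': a witness record $r_1 \in S_1 \sqcup S_2$ with $r_1.g \neq null$ for which no record $r_2 \in S_1 \sqcup S_2$ satisfies $r_1.g = r_2.h$. The goal is then to show this witness cannot exist.

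The crux exploits the structure of the merge operator. Since $\sqcup$ is set union, I have $S_1 \subseteq S_1 \sqcup S_2$ and $S_2 \subseteq S_1 \sqcup S_2$, and $r_1$ must lie in at least one component; fix one, say $S_1$. Because $S_1$ is $I$-valid and contains $r_1$, there is a matching record $r_3 \in S_1$ with $r_3.h = r_1.g$. Monotonicity of set union then carries $r_3$ into $S_1 \sqcup S_2$, so the referenced record \emph{does} appear in the merge, contradicting the choice of $r_1$ as a dangling witness. This closes the argument.

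The main obstacle is really the single pivot of the whole proof: the reliance on the merge operator being non-destructive. The argument succeeds precisely because set union never deletes a version, so a referenced record that is present in a valid component survives into the merge; this is exactly the property that will \emph{fail} for deletions (Claim 7), and it is where I would be most careful to state the dependence on the set-union model explicitly. By contrast, the common-ancestor hypothesis plays no essential role in this direction, since it only narrows the set of state pairs I must consider; the argument goes through a fortiori with it in place.
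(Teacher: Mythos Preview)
Your argument is correct and mirrors the paper's own proof essentially step for step: assume a violating merged state, extract the dangling record $r_1$, locate it in one of $S_1,S_2$ by set union, use $I$-validity of that component to find a matching $r_3$, and observe $r_3$ survives into $S_1\sqcup S_2$ under union. Your added remarks---that the non-destructive nature of set union is the pivot (anticipating Claim~7) and that the common-ancestor hypothesis is inert here---are accurate and align with the paper's informal discussion in Section~\ref{sec:bcc-practice}.
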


For simplicity, in the following proof, we assume that deleted elements remain deleted under merge. In practice, this can be accomplished by tombstoning records and, if required, using counters to record the number of deletions and additions~\cite{crdt}. We represent a deleted version $x_d$ by $\neg x_b$.

\begin{claim}[Concurrent deletion and insertion is not \iconfluent with respect to foreign key constraints]
\label{claim:fk-delete}
 Consider the following transactions:
\begin{align*}
T_{1f}&\coloneqq w(x_a.g=1);~commit\\
T_{2f}&\coloneqq delete(x_b);~commit
\end{align*}
and the foreign key constraint:
$$I_f(D) = \{\forall r_f \in D, r_f.g \neq null,~\exists r_t \in D \suchthat \neg r_t \notin D \mathand r_f.g = r_t.h\}$$
Foreign key constraints hold over the initial database $S_i=\{x_b.h=1\}$ ($I_u(S_i) \rightarrow true$) and on independent execution of $T_a$ and $T_b$:
\begin{align*}
T_{1f}(\{x_b.h=1\})&=\{x_a.g=1, x_b.h=1\},~I_f(\{x_a=1\}) \rightarrow true\\
T_{2f}(\{x_b.h=1\})&=\{x_b.h=1, \neg {x_b}\}~I_f(\{x_b.h=1, \neg {x_b}\}) \rightarrow true
\end{align*}
However, merging these states results in invalid state: 
$$I_f(\{x_a.g=1\}\sqcup \{x_b.h=1, \neg {x_b}\}) \rightarrow false$$
Therefore, $\{T_{1f}, T_{2f}\}$ is not \iconfluent under $I_f$.\end{claim}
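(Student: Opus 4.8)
The plan is to prove this negative result using the counterexample structure prescribed in Appendix C for disproving \iconfluence: I would exhibit two $I$-$T$-reachable states sharing a common ancestor whose merge violates the foreign key constraint. The guiding intuition is that insertion and deletion interact badly precisely because a referencing record can be added on one replica while the record it references is concurrently deleted on another; a non-destructive set-union merge then juxtaposes the new reference against the deletion, producing a ``dangling pointer'' exactly of the kind that insertion alone (Claim~\ref{claim:fk-insert}) could never create.

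First I would fix the two transactions: one inserting a record $x_a$ whose foreign key field $g$ references the key $h$ of some record $x_b$, and the other deleting $x_b$. Concretely, $T_{1f}$ performs $w(x_a.g=1)$ and $T_{2f}$ performs $delete(x_b)$. Next I would choose the common ancestor to contain the target record, $S_i = \{x_b.h=1\}$, which trivially satisfies the foreign key invariant $I_f$ because it holds no record with a non-null foreign key; this makes both branches $I$-$T$-reachable from a valid state. I would then verify each branch is $I$-valid in isolation. On the first replica, $T_{1f}(S_i) = \{x_a.g=1, x_b.h=1\}$ is valid since the referenced record $x_b$ is still present and undeleted. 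On the second replica, $T_{2f}(S_i) = \{x_b.h=1, \neg x_b\}$ is valid because, although $x_b$ is now tombstoned, no surviving record carries a non-null foreign key, so the universally quantified constraint holds vacuously.

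Finally I would take the merge. Under set union the combined state is $\{x_a.g=1, x_b.h=1, \neg x_b\}$, in which $x_a.g = 1$ is non-null yet the only candidate referent $x_b$ is tombstoned ($\neg x_b \in D$); hence no undeleted $r_t$ with $r_t.h = 1$ exists and $I_f$ evaluates to false. Since both $I$-$T$-reachable branches are $I$-valid while their merge is not, this witnesses that $\{T_{1f}, T_{2f}\}$ is not \iconfluent under $I_f$, completing the counterexample.

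The hard part will not be the counterexample itself but formalizing deletion soundly under a monotone, set-union merge. Because merge never removes versions, a na\"ive representation of deletion would simply resurrect $x_b$ on merge and defeat the whole example. I would therefore commit to \emph{tombstoning}---representing a deleted version $x_b$ by a persistent marker $\neg x_b$ that survives merge---and, crucially, rephrase the invariant so that a record counts as a valid referent only when it is present \emph{and} not tombstoned (the clause $\neg r_t \notin D$). Pinning down this deletion semantics and checking it remains consistent with the commutative, associative, and idempotent merge assumed in the model is the delicate step; once it is in place, the validity of each branch and the invalidity of the merge follow by direct evaluation.
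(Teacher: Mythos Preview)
Your proposal is correct and follows essentially the same counterexample as the paper: the same pair of transactions, the same ancestor $\{x_b.h=1\}$, the same per-branch validity checks, and the same merged state whose dangling reference violates $I_f$. Your explicit discussion of tombstoning and the need for the $\neg r_t \notin D$ clause mirrors exactly the assumption the paper states just before the claim, so nothing is missing or divergent.
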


We denote a casading delete of all records that reference field $f$ with value $v$ ($v$ a constant) as $cascade(f=v)$.

\begin{claim}[Cascading deletion and insertion are \iconfluent with respect to foreign key constraints]
\label{claim:fk-cascade}
Assume that cacading deletion and insertion of new records are not \iconfluent with respect to some foreign key constraint $I(D) = \{\forall r_f \in D$ such that $r_f.g \neq null$, $\exists r_t \in D$ such that $r_f.g = r_t.h$ if $cascade(h=r_f.g \neq v)\}$ (there exists a foreign key reference between fields $g$ and $h$ and the corresponding value for field $h$ has not been deleted-by-cascade). By definition, there must exist two $I$-$T$-reachable states $S_1$ and $S_2$ with common ancestor reachable by performing insertions such that $I(S_1) \rightarrow true$ and $I(S_2) \rightarrow true$ but $I(S_1 \sqcup S_2) \rightarrow false$; therefore, there exists some version $r_1 \in S_1 \sqcup S_2$ such that $r_1.f \neq null$ but $\nexists r_2 \in S_1 \sqcup S_2$ such that $r_1.g = r_2.h$. From the proof of Claim~\ref{claim:fk-insert}, we know that insertion is \iconfluent, so the absence of $r_2$ must be due to some cascading deletion. Under set union, $r_1$ must appear in exactly one of $S_1$ or $S_2$ (if $r_1$ appeared in both, there would be no deletion, a contradiction since we know insertion is \iconfluent). For the state $S_j$ in which $r_1$ does not appear (either $S_1$ or $S_2$), $S_j$ must include $cascade(h=r_1.g)$. But, if $cascade(h=r_1.g) \in S_j$, $cascade(h=r_1.g)$ must also be in $S_i \sqcup S_j$, a contradiction and so $S_i \sqcup S_j \rightarrow true$, a contradiction.
\end{claim}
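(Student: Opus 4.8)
The plan is to follow the contradiction template used throughout Appendix C for positive \iconfluence results, bootstrapping off the already-established Claim~\ref{claim:fk-insert}. I would begin by assuming the negation: that cascading deletion together with insertion is not \iconfluent with respect to the foreign key invariant $I_f$. By the definition of \iconfluence this yields two $I$-$T$-reachable states $S_1$ and $S_2$ sharing a common ancestor, with $I_f(S_1) = true$ and $I_f(S_2) = true$, yet $I_f(S_1 \sqcup S_2) = false$. Unpacking $I_f$, the invalid merged state must contain a ``dangling'' record $r_1$ with $r_1.g \neq null$ for which no witnessing record $r_2$ with $r_1.g = r_2.h$ survives in $S_1 \sqcup S_2$ (and whose referent has not been validly cascade-removed).

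The key structural step is to localize the cause of the dangling reference. Because set union is non-destructive, a needed referent can be absent only if (i) it was never inserted, or (ii) it was removed by a cascading delete. Case (i) is ruled out immediately by Claim~\ref{claim:fk-insert}: had the branches performed only insertions, the referent would be present in whichever of $S_1$, $S_2$ contained $r_1$ and would therefore survive into the union. Hence the absence of the referent must be attributable to a $cascade(h = r_1.g)$ operation in one of the branches. I would then argue that $r_1$ cannot appear in both $S_1$ and $S_2$: if it did, neither branch could have issued the offending cascade, since a cascade removing the referent of $r_1$ also removes every record referencing that value --- including $r_1$ itself --- contradicting $r_1$'s presence in that branch. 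So $r_1$ appears in exactly one branch, say $S_i$, while the other branch $S_j$ contains $cascade(h = r_1.g)$.

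The contradiction then follows from monotonic growth of state under merge together with the deterministic effect of cascade. Since $cascade(h = r_1.g) \in S_j$ and state grows monotonically under set union, this cascade is reflected in $S_i \sqcup S_j$ as well. But $cascade(h = r_1.g)$ deletes every record referencing $h$ with value $r_1.g$, and $r_1$ is one such record; hence $r_1$ should be tombstoned (i.e., $\neg r_1$ present) in the merged state, contradicting our choice of $r_1$ as a live dangling record in $S_1 \sqcup S_2$. This forces the conclusion that no such $S_1$, $S_2$ exist, establishing \iconfluence.

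I expect the main obstacle to be pinning down the merge semantics of $cascade$ precisely enough that ``the cascade is reflected in the union'' is rigorous rather than hand-waved. In particular, I would make explicit the surrounding assumption that deletions are recorded durably via tombstones (the $\neg x$ notation), so that a cascade commutes with insertions arriving from the other branch, and that re-applying the recorded cascade to the union evaluates against \emph{all} records present --- including referencing records such as $r_1$ inserted in $S_i$. Once $cascade$ is treated as an operation whose effect on the union is to delete all matching referents regardless of which branch inserted them, the remaining reasoning is exactly the contradiction structure of Claim~\ref{claim:fk-insert}.
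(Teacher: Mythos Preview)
Your proposal is correct and follows essentially the same contradiction template as the paper: assume failure, extract a dangling $r_1$ in the union, invoke Claim~\ref{claim:fk-insert} to blame a cascade, localize $r_1$ to one branch $S_i$ and the cascade to the other $S_j$, then observe the cascade survives into $S_i \sqcup S_j$. The only cosmetic difference is the closing beat---the paper reads the presence of $cascade(h=r_1.g)$ in the union as making the (cascade-conditioned) invariant vacuously true for $r_1$, whereas you argue the cascade tombstones $r_1$ itself; both readings discharge the contradiction, and your explicit discussion of the tombstone/merge semantics of $cascade$ is exactly the care the paper's own argument leaves implicit.
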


We define a ``consistent'' secondary index invariant as requiring that, when a record is visible, its secondary index entry should also be visible. This is similar to the guarantees provided by Read Atomic isolation~\cite{ramp-txns}. For simplicity, in the following proof, we only consider updates to a single indexed attribute $attr$, but the proof is easily generalizable to multiple index entries, insertions, and deletion via tombstones. We use last-writer wins for index entries.

\begin{claim}[Updates are \iconfluent with respect to consistent secondary indexing] \label{claim:indexing}
  Assume that updates to records are not \iconfluent with respect a secondary index constraint on attribute $attr$:\vspace{.5em}

\noindent $I(D) = \{\forall r_f \in D$ such that $r_f.attr \neq null$ and $f$ is the highest version of $r\in D$, $\exists r_{idx} \in D$ such that $r_f \in r_{idx}.entries$ (all entries with non-null $attr$ are reflected in the secondary index entry for $attr$)\vspace{.5em}

  Represent an update to record $r_x$ as $\{w(r_x)$ and, if $r_x.attr \neq null$, also $r_{idx}.entries.add(r_x)$, else $r_{idx}.entries.delete(r_x)\}$.\vspace{.5em}

  By definition, there must exist two $I$-$T$-reachable states $S_1$ and $S_2$ with common ancestors reachable by performing insertions $S_1$ and $S_2$ such that $I(S_1) \rightarrow true$ and $I(S_2) \rightarrow true$ but $I(S_1 \sqcup S_2) \rightarrow false$; therefore, there exists some version $r_1 \in S_1 \sqcup S_2$ such that $r_1.attr \neq null$ but $\nexists r_{idx} \in S_1 \sqcup S_2$ or $\exists r_{idx} \in S_1 \sqcup S_2$ but $r_1 \notin r_{idx}.entries$. In the former case, $r_{idx} \notin S_1$ or $S_2$, but $r_1 \in S_1$ or $r_1 \in S_2$, a contradiction. The latter case also produces a contradiction: if $r_1 \in S_1$ or $r_1 \in S_2$, it must appear in $r_{idx}$, a contradiction.  \end{claim}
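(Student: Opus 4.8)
The plan is to follow the proof-by-contradiction template used for the other positive results (Claims~\ref{claim:fk-insert} and~\ref{claim:fk-cascade}). I would assume updates are \emph{not} \iconfluent with respect to the secondary-index invariant $I$, so that there exist two $I$-$T$-reachable states $S_1$ and $S_2$ sharing a common ancestor for which $I(S_1)$ and $I(S_2)$ hold but $I(S_1 \sqcup S_2)$ is false, and then derive a contradiction.

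First I would unpack the failure of $I$ on the merged state. By the form of the invariant, $I(S_1 \sqcup S_2) = false$ means there is a version $r_1 \in S_1 \sqcup S_2$ that is the highest version of its record and has $r_1.attr \neq null$, yet whose index witness is missing: either (a) no index record $r_{idx}$ occurs in $S_1 \sqcup S_2$, or (b) some $r_{idx}$ occurs but $r_1 \notin r_{idx}.entries$. In both cases I would exploit the monotonicity of set union on record versions: since $r_1 \in S_1 \sqcup S_2$, it appears in $S_1$ or $S_2$; say $r_1 \in S_1$. Because $S_1 \subseteq S_1 \sqcup S_2$, no strictly higher version of that record can reside in $S_1$ (it would survive into the merge and contradict $r_1$ being highest there), so $r_1$ is also the highest version of its record \emph{within} $S_1$. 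Then $I(S_1) = true$ forces an index record in $S_1$ whose entries contain $r_1$; under set union this record (and the entry) persists into $S_1 \sqcup S_2$, contradicting case (a) and case (b) respectively.

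The step I expect to be the main obstacle is the interaction between the stated last-writer-wins resolution of index records and the set-union growth of record versions --- the simple ``persists under union'' argument is only clean if entries genuinely grow monotonically. I would argue that LWW cannot silently drop $r_1$ from the merged index, and the ``highest version'' qualifier is exactly what makes this go through: the write that produced the highest version $r_1$ (with non-null $attr$) is the same atomic update that adds $r_1$ to the index, so it carries the latest relevant index timestamp and its add wins under LWW. Any concurrent update that \emph{deletes} from the index corresponds to a record version that is either null-valued or strictly below $r_1$; by the monotonicity argument above such a version cannot be the surviving highest version of the record in the merge, so its delete cannot be the last writer that governs $r_1$. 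Making this precise --- tying the version order of record writes to the timestamp order of the index writes they trigger --- is the delicate point I would spell out carefully, after which the two cases close as above.
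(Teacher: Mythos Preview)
Your proposal is correct and follows essentially the same route as the paper: assume non-\iconfluence, obtain $I$-$T$-reachable $S_1,S_2$ with $I(S_1\sqcup S_2)$ false, case-split on whether $r_{idx}$ is absent or present-but-missing-$r_1$, and in each case use the monotonicity of set-union merge to push the violation back into one of $S_1,S_2$.

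The one noteworthy difference is that you explicitly engage with the ``highest version'' qualifier and the last-writer-wins resolution of index entries, arguing why the add of $r_1$ to the index cannot be overwritten by a concurrent delete in the merge. The paper's proof is terser here: it simply asserts that if $r_1\in S_1$ (or $S_2$) then $r_1$ must appear in $r_{idx}$ and that this persists, without spelling out the LWW interaction. Your extra paragraph on tying the version order of record writes to the timestamp order of their induced index writes is additional rigor beyond what the paper provides, not a divergence in strategy.
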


In our formalism, we can treat materialized views as functions over database state $f(D) \rightarrow c$.

\begin{claim}[Updates are \iconfluent with respect to materialized view maintenance] The proof is relatively straightforward if we treat the materialized view record(s) $r$ as having a foreign key relationship to any records in the domain of the function (as in the proof of Claim~\ref{claim:fk-cascade} and recompute the function on update, cascading delete, and $\sqcup$.\end{claim}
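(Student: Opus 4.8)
The plan is to prove this positive result by contradiction, following the second proof template stated in this appendix, by splitting the materialized-view invariant into a ``value'' part and a ``visibility'' part and reducing each to a result already in hand. I model the view as a distinguished record $r$ carrying the precomputed value, with invariant
\[
I(D) = \{\, r.\mathrm{value} = f(D) \,\} \wedge \{\, r \text{ references every record in } \mathrm{dom}(f) \,\},
\]
so that $I$-validity asserts both that the stored value equals $f$ evaluated over the current state \emph{and} that $r$ is installed co-visibly with the primary records it summarizes. The second conjunct is exactly a foreign-key relationship, which lets me lean on the cascading-delete argument of Claim~\ref{claim:fk-cascade}.

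First I would pin down the merge behavior for $r$: in place of plain set union, $\sqcup$ recomputes the view by evaluating $f$ on the merged primary state, and every update and cascading delete maintains $r$ atomically with the primary change it triggers. Then, supposing for contradiction that updates are not \iconfluent, I obtain two $I$-$T$-reachable states $S_1, S_2$ sharing a common ancestor, each $I$-valid, with $S_1 \sqcup S_2$ invalid. The violation must lie in one of the two conjuncts. A failure of the visibility conjunct would mean some record in $\mathrm{dom}(f)$ survives the merge while its reference from $r$ does not, which is impossible under the non-destructive, tombstone-aware merge by the same reasoning as the proof of Claim~\ref{claim:fk-cascade}. A failure of the value conjunct would mean $r.\mathrm{value} \neq f(S_1 \sqcup S_2)$; but the merge I fixed sets $r.\mathrm{value}$ to precisely $f$ of the merged primary state, a contradiction. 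Hence no such $S_1, S_2$ exist.

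The hard part will be justifying the specialized, recomputing merge rather than relying on generic set union, and checking that it stays coordination-free. I would argue that recomputation is a purely local function of data already present when $\sqcup$ fires: once the primary records are unioned (valid via the reduction to Claim~\ref{claim:fk-cascade}), evaluating $f$ requires no contact with any other replica. The remaining delicacy is atomic co-installation of each view update with the primary update that causes it---the read-atomic visibility property discussed in the body, itself achievable without coordination. Finally, as in Claims~\ref{claim:fk-insert} and~\ref{claim:fk-cascade}, I expect the common-ancestor hypothesis to be inessential here, since monotone growth of the primary state together with deterministic recomputation of $f$ already force validity under merge.
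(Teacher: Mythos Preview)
Your proposal is correct and follows essentially the same approach as the paper: the paper's entire argument is the one-line sketch embedded in the claim---treat the view record as standing in a foreign-key relationship to the records in $\mathrm{dom}(f)$ (reducing co-visibility to Claim~\ref{claim:fk-cascade}) and recompute $f$ on update, cascading delete, and $\sqcup$. You have simply fleshed out that sketch by making the two conjuncts explicit and running the contradiction template on each, which is a faithful elaboration rather than a different route.
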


For our proofs over counter ADTs, we represent increments of a counter $c$ by $inc_i(c)$, where $i$ is a distinct invocation, decrements of $c$ by $dec_i(c)$, and the value of $c$ in database $D$ as $val(c, D) = |\{j \mid inc_j(c) \in D\}| - |\{k \mid dec_k(c) \in D\}|$.

\begin{claim}[Counter ADT increments are \iconfluent with respect to greater-than constraints] \label{claim:ge-inc-adt}
Assume increments are not \iconfluent with respect to some per-counter greater-than constraint $I(D)=val(c, D) < k$, where $k$ is a constant. By definition, there must exist two $I$-$T$-reachable states $S_1$ and $S_2$ with common ancestor reachable by executing write transactions such that $I(S_1) \rightarrow true$ and $I(S_2) \rightarrow true$ but $I(S_1 \sqcup S_2) \rightarrow false$; therefore, $val(c, S_1 \sqcup S_2) \leq k$. However, this implies that $val(c, S_1) \leq k)$, $val(c, S_2$, or both, a contradiction.  \end{claim}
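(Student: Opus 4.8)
The plan is to establish this positive result by contradiction, mirroring the template already used for Claims~\ref{claim:eq-proof} and~\ref{claim:fk-insert}. Here the invariant is a per-counter lower bound, $I(D) = \{val(c, D) > k\}$ for a fixed constant $k$, and the only state-mutating operation is the increment $inc_i(c)$ (so no $dec(c)$ invocations ever enter the workload). Suppose, toward a contradiction, that the increment transactions are \emph{not} \iconfluent with respect to $I$. By definition there are two $I$-$T$-reachable states $S_1$ and $S_2$ sharing a common ancestor, with $I(S_1)$ and $I(S_2)$ both true but $I(S_1 \sqcup S_2)$ false; that is, $val(c, S_1) > k$ and $val(c, S_2) > k$, yet $val(c, S_1 \sqcup S_2) \le k$.

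The crux is a monotonicity observation about the counter ADT under set-union merge. Because the workload contains no decrements, the counted value reduces to $val(c, D) = |\{j \mid inc_j(c) \in D\}|$, a quantity that only grows as the underlying set of increment invocations grows. Since merge is set union, $S_1 \sqcup S_2 = S_1 \cup S_2 \supseteq S_1$, and distinctness of increment invocations gives $val(c, S_1 \sqcup S_2) \ge val(c, S_1)$. Combining this with $I(S_1)$ yields $val(c, S_1 \sqcup S_2) \ge val(c, S_1) > k$, which directly contradicts $val(c, S_1 \sqcup S_2) \le k$. Hence no such $S_1, S_2$ can exist, and the increment transactions are \iconfluent with respect to $I$.

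The step that deserves care rather than hand-waving is the monotonicity inequality $val(c, S_1 \sqcup S_2) \ge \max(val(c, S_1), val(c, S_2))$, because this is exactly where the restriction to increments earns its keep. The argument collapses the moment decrements are permitted: merging would then also union the two states' decrement invocations, and the negative term in $val$ could drag the merged value down to or below $k$ even though each input state individually cleared the bound. That is precisely the counterexample shape behind the companion negative result for decrements under a greater-than constraint, and, symmetrically, the fact that $val$ is \emph{not} monotone-decreasing under increment-union explains why increments fail against an upper-bound (less-than) invariant. I would therefore isolate the monotonicity of $val$ under the increment-only merge as the single load-bearing lemma, and let the corresponding entries in Table~\ref{table:invariants} for the sibling cases fall out as contrapositives of this same observation.
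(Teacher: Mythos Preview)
Your argument is correct and follows the same contradiction-via-monotonicity template as the paper: both assume a violating merged state, then use the fact that an increment-only counter value is monotone under set-union merge to force $S_1$ (and in fact $S_2$ as well) to already violate $I$. Your write-up is actually more explicit than the paper's terse version, which contains a typo (it states the invariant as $val(c,D) < k$ rather than $> k$) and leaves the monotonicity inequality implicit; your isolation of $val(c, S_1\sqcup S_2) \ge val(c,S_1)$ as the load-bearing step is exactly the right emphasis.
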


\begin{claim}[Counter ADT increments are not \iconfluent with respect to less-than constraints]
\label{claim:le-inc-adt}
Consider the following transactions:
\begin{align*}
T_{1i}&\coloneqq inc_1(c);~commit\\
T_{2i}&\coloneqq inc_2(c);~commit
\end{align*}
and the less-than inequality constraint:
$$I_i(D) = \{val(c, D) < 2\}$$
$I_i$ holds over the empty database state ($I_i(\{\}) \rightarrow true$) and when $T_a$ and $T_b$ are independently executed:
\begin{align*}
T_{1i}(\{\})&=\{inc_1(c)=1\},~I_i(\{inc_1(c)=1\}) \rightarrow true\\
T_{2i}(\{\})&=\{inc_2(c)\},~I_i(\{inc_2(c)\}) \rightarrow true
\end{align*}
However, merging these states results in invalid state:
$$I_u(\{inc_1(c)\}\sqcup \{inc_2(c)\}) \rightarrow false$$
Therefore, $\{T_{1i}, T_{2i}\}$ is not \iconfluent under $I_u$.
\end{claim}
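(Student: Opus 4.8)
The plan is to prove non-\iconfluence by counterexample, following the first of the two proof structures laid out at the start of this appendix: exhibit two $I$-$T$-reachable states sharing a common ancestor whose merge is not $I$-valid. The guiding intuition is that increments accumulate monotonically under set-union merge---each is recorded as a distinct invocation and contributes $+1$ to $val$---so I expect two independently-valid branches to combine into a state whose counter value is strictly larger than either branch's, and thus capable of exceeding a less-than threshold.

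Concretely, I would fix the threshold invariant $I_i(D) = \{val(c, D) < 2\}$ and take the two single-increment transactions $T_{1i} = inc_1(c)$ and $T_{2i} = inc_2(c)$, each tagged with its own distinct invocation identifier. Starting from the empty ancestor $\{\}$, I would first verify $val(c, \{\}) = 0 < 2$, so the common ancestor is $I$-valid. I would then apply each transaction independently: $T_{1i}(\{\}) = \{inc_1(c)\}$ has $val = 1 < 2$, and $T_{2i}(\{\}) = \{inc_2(c)\}$ likewise has $val = 1 < 2$, so both are $I$-$T$-reachable and $I$-valid.

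The decisive step is the merge. Because the two increments carry distinct invocation tags, set union does not collapse them: $\{inc_1(c)\} \sqcup \{inc_2(c)\} = \{inc_1(c), inc_2(c)\}$, so that $val(c, \cdot) = 2 \not< 2$ and $I_i$ fails. This is exactly the counterexample required, establishing that $\{T_{1i}, T_{2i}\}$ is not \iconfluent under $I_i$.

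The single point demanding care---and the closest thing to an obstacle---is ensuring the two increments are genuinely distinct under merge. Had both transactions issued the \emph{same} invocation, idempotent union would merge them to one element and the counter would stay at $1$, destroying the counterexample; the distinct-invocation bookkeeping baked into the counter ADT's $val$ definition is precisely what makes the accumulation, and hence the violation, real. This also clarifies, by contrast with Claim~\ref{claim:ge-inc-adt}, why the asymmetry arises: the same monotone accumulation that keeps increments safe under a greater-than bound is exactly what breaks a less-than bound.
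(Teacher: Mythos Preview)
Your proposal is correct and follows essentially the same counterexample as the paper: both exhibit the empty state as common ancestor, apply $T_{1i}$ and $T_{2i}$ independently to reach $I$-valid states with $val=1$, and observe that their set-union merge has $val=2$, violating $I_i$. Your additional remark about the necessity of distinct invocation tags is a helpful clarification but not a departure from the paper's argument.
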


\begin{claim}[Counter ADT decrements are not \iconfluent with respect to greater-than constraints] The proof is similar to the proof of Claim~\ref{claim:le-inc-adt}; substitute $dec$ for $inc$ and choose $I_i(D) = \{val(c,D) > -2\}$.  \end{claim}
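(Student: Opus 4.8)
The plan is to follow the proof-by-counterexample template used for Claim~\ref{claim:le-inc-adt}, flipping the direction of both the operations and the threshold. First I would exhibit two single-decrement transactions on a shared counter $c$,
\begin{align*}
T_{1d}&\coloneqq dec_1(c);~commit\\
T_{2d}&\coloneqq dec_2(c);~commit
\end{align*}
together with the greater-than constraint $I_i(D) = \{val(c, D) > -2\}$.

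Next I would check validity along each branch of a single-transaction divergence from the empty ancestor. The empty database satisfies the invariant, since $val(c, \{\}) = 0 > -2$. Executing either transaction independently against the empty state records exactly one decrement invocation, so $val(c, T_{1d}(\{\})) = val(c, T_{2d}(\{\})) = -1 > -2$, and both resulting states are $I$-valid.

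Then I would merge the two branches. Because $dec_1$ and $dec_2$ are distinct invocations, set union retains both, so the merged state contains two decrement invocations and no increments, giving $val(c, T_{1d}(\{\}) \sqcup T_{2d}(\{\})) = -2$, which fails $> -2$. Hence $\{T_{1d}, T_{2d}\}$ is not \iconfluent under $I_i$.

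The step that carries the real content---rather than any genuine difficulty---is the merge computation: the argument works only because the two decrements carry distinct invocation labels, so that set union accumulates them additively instead of collapsing idempotently. If both branches issued the same invocation, the merge would count a single decrement and the counterexample would evaporate. This mirrors exactly the role that the distinct labels $inc_1, inc_2$ play in Claim~\ref{claim:le-inc-adt}, which is why substituting $dec$ for $inc$ and the threshold $val(c,D) > -2$ for $val(c,D) < 2$ suffices.
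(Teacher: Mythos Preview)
Your proposal is correct and follows exactly the approach the paper indicates: it performs the prescribed substitution of $dec$ for $inc$ and $val(c,D) > -2$ for $val(c,D) < 2$ in the counterexample template of Claim~\ref{claim:le-inc-adt}. Your added remark about the necessity of distinct invocation labels is accurate and makes explicit the reason set-union merge accumulates the two decrements.
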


\begin{claim}[Counter ADT decrements are \iconfluent with respect to less-than constraints] \label{claim:le-inc-adt} Unsurprisingly, the proof is almost identical to the proof of Claim~\ref{claim:ge-inc-adt}, but with $<$ instead of $>$ and $dec$ instead of $inc$. \end{claim}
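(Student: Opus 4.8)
The plan is to follow the second of the two proof templates laid out at the start of Appendix~C, namely proof by contradiction, since this is a positive \iconfluence result. Concretely, I would fix a per-counter less-than constraint $I(D) = \{val(c,D) < k\}$ for a constant $k$, take the transaction set $T$ to consist of decrement operations $dec_i(c)$ (any reads and assignments in $T$ pose no difficulty), and assume for contradiction that $T$ is not \iconfluent with respect to $I$. By the definition of \iconfluence this yields two $I$-$T$-reachable states $S_1$ and $S_2$ sharing a common ancestor $S_a$ with $I(S_1)$ and $I(S_2)$ both true but $I(S_1 \sqcup S_2)$ false; that is, $val(c,S_1) < k$ and $val(c,S_2) < k$, yet $val(c, S_1 \sqcup S_2) \geq k$.

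The crux of the argument is a monotonicity observation about the merge operator. Since $\sqcup$ is set union and the diverging branches from $S_a$ to $S_1$ and to $S_2$ add only decrement invocations, the collection of increment invocations is identical in $S_a$, $S_1$, $S_2$, and $S_1 \sqcup S_2$, while the set of decrement invocations in $S_1 \sqcup S_2$ is the union of those in $S_1$ and $S_2$ and hence a superset of each. Recalling that $val(c,D)$ is the number of increment invocations in $D$ minus the number of decrement invocations in $D$, adding decrements can only lower the value, so $val(c, S_1 \sqcup S_2) \leq val(c, S_1) < k$ (and likewise $\leq val(c,S_2)$). This contradicts $val(c, S_1 \sqcup S_2) \geq k$, establishing the claim. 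The argument is the exact mirror image of Claim~\ref{claim:ge-inc-adt}, with the roles of $inc$/$dec$ and $>$/$<$ swapped.

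The step I expect to require the most care is this monotonicity observation, specifically pinning down that the increment count is unchanged across the merge. This is where the restriction of $T$ to decrement operations is essential: if increments were also permitted, the union could introduce additional increments and raise $val$ above $k$, which is precisely the failure witnessed for increments against less-than constraints in the preceding claim. I would therefore state explicitly that only decrement transactions are under consideration and that, because each invocation carries a distinct identifier and the two branches descend from the common ancestor $S_a$, set union neither double-counts shared invocations nor adds any new increment, leaving $val$ non-increasing under $\sqcup$.
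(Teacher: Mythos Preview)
Your proposal is correct and follows essentially the same approach as the paper: the paper's ``proof'' of this claim is literally just the instruction to mirror Claim~\ref{claim:ge-inc-adt} with $dec$ in place of $inc$ and $<$ in place of $>$, which is exactly what you do. If anything, you are more careful than the paper, since you make explicit the monotonicity reason (only decrement invocations are added along the diverging branches, so $val$ cannot increase under $\sqcup$) that the paper's proof of Claim~\ref{claim:ge-inc-adt} leaves implicit.
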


We provide proofs for ADT lists; the remainder are remarkably similar. Our implementation of ADT lists in these proofs uses a lexicographic sorting of values to determine list order. Transactions add a version $v$ to list $l$ via $add(v,l)$ and remove it via $del(v,l)$ (where an item is considered contained in the list if it has been added more times than it has been deleted) and access the length of $l$ in database $D$ via $size(l) = |\{k \mid add(k, l) \in D\}|- |\{m \mid del(m, l) \in D\}|$ (note that $size$ of a non-existent list is zero).

\begin{claim}[Modifying a list ADT is \iconfluent with respect to containment constraints]
  Assume ADT list modifications are not \iconfluent with respect to some equality constraint $I(D)=\{add(k, l) \in D \wedge del(k, l) \notin D \}$ for some constant $k$. By definition, there must exist two $I$-$T$-reachable states $S_1$ and $S_2$ with common ancestor reachable by list modifications such that $I(S_1) \rightarrow true$ and $I(S_2) \rightarrow true$ but $I(S_1 \sqcup S_2) \rightarrow false$; therefore, $add(k, l) \notin \{S_1 \sqcup S_2\}$ or $del(k, l) \in \{S_1 \sqcup S_2\}$. In the former case, neither $S_1$ nor $S_2$ contain $add(k,l)$ a contradiction. In the latter case, if either of $S_1$ or $S_2$ contains $del(k,l)$, it will be invalid, a contradiction.\end{claim}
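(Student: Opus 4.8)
The plan is to follow the standard positive-direction template used throughout this appendix (e.g., Claims~\ref{claim:fk-insert} and~\ref{claim:indexing}): argue by contradiction, assuming that list modifications are \emph{not} \iconfluent with respect to the containment invariant $I(D) = \{add(k,l) \in D \wedge del(k,l) \notin D\}$ for some fixed value $k$. By the definition of \iconfluence, this assumption yields two $I$-$T$-reachable states $S_1$ and $S_2$ sharing a common ancestor, each produced by a sequence of $add$/$del$ invocations, such that $I(S_1)$ and $I(S_2)$ both hold but $I(S_1 \sqcup S_2)$ fails.

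Next I would unfold the negation of the invariant at the merged state. Since $I$ is the conjunction of ``$add(k,l)$ is present'' and ``$del(k,l)$ is absent,'' its failure at $S_1 \sqcup S_2$ means that either $add(k,l) \notin S_1 \sqcup S_2$ or $del(k,l) \in S_1 \sqcup S_2$, and I would dispatch each disjunct separately. In the first case, I appeal to the fact that set-union merge is non-destructive and monotone: because $S_1 \subseteq S_1 \sqcup S_2$, the absence of $add(k,l)$ from the union forces its absence from $S_1$, directly contradicting $I(S_1)$. In the second case, I use the dual fact that set union fabricates no new operations, so $del(k,l) \in S_1 \sqcup S_2$ implies $del(k,l)$ already occurs in $S_1$ or in $S_2$; either placement violates the ``$del(k,l)$ absent'' clause of the corresponding state's validity, again a contradiction. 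Having derived a contradiction in both cases, I conclude that no such $S_1, S_2$ exist, hence list modifications are \iconfluent with respect to containment. Note that the argument does not depend on the lexicographic ordering or on the $size$ machinery introduced for the ADT; those matter only for the \texttt{SIZE=} mutation claim, so I would keep them out of this proof.

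I expect the only delicate point---rather than a genuine obstacle---to be making precise which two properties of the merge operator are being invoked: \emph{monotonicity} (union never drops an $add$) for the first case, and \emph{non-fabrication} (union never invents a $del$) for the second. Both are immediate for set union, mirroring the insertion argument for foreign keys in Claim~\ref{claim:fk-insert}. A secondary subtlety worth flagging is that the stated invariant uses plain membership of the single operations $add(k,l)$ and $del(k,l)$ rather than the full ``added more often than deleted'' counting semantics; for a single fixed value $k$ with tombstone-free deletions this reduction is exactly what keeps the two-case split clean, and I would note that extending to full multiset/counting semantics would instead require arguing that union preserves both the add- and del-counts.
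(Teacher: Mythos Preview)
Your proposal is correct and follows essentially the same argument as the paper: contradiction via the two-case split on the negated conjunction, using that set-union merge neither drops $add(k,l)$ nor fabricates $del(k,l)$. Your added remarks on monotonicity/non-fabrication and on the plain-membership versus counting semantics are useful clarifications but do not change the approach.
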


\begin{claim}[Modifying a list ADT is \iconfluent with respect to non-containment constraints]
Assume ADT list modifications are not \iconfluent with respect to some non-containment constraint $I(D)=\{add(k, l) \notin D \wedge del(k, l) \in D\}$ for some constant $k$. By definition, there must exist two $I$-$T$-reachable states $S_1$ and $S_2$ with common ancestor reachable via list modifications such that $I(S_1) \rightarrow true$ and $I(S_2) \rightarrow true$ but $I(S_1 \sqcup S_2) \rightarrow false$; therefore, $add(k, l) \in \{S_1 \sqcup S_2\}$ and $del(k,l) \notin \{S_1 \sqcup S_2\}$. But this would imply that $add(k, l) \in S_1$, $add(k, l) \in S_2$, or both (while $del(k,l)$ is in neither), a contradiction.\end{claim}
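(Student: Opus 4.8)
The plan is to prove \iconfluence by contradiction, following exactly the positive-result template laid out at the start of this appendix (and mirroring the immediately preceding containment proof). Suppose, for contradiction, that the list modifications are \emph{not} \iconfluent with respect to the non-containment invariant $I(D)=\{add(k,l)\notin D \wedge del(k,l)\in D\}$. Then by definition there exist two $I$-$T$-reachable states $S_1$ and $S_2$ sharing a common ancestor, with $I(S_1)=true$ and $I(S_2)=true$, yet $I(S_1\sqcup S_2)=false$.

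The next step is to unfold what $I(S_1\sqcup S_2)=false$ means. Because $I$ is a \emph{conjunction} of two atomic conditions, De Morgan gives that its violation is the \emph{disjunction} $add(k,l)\in S_1\sqcup S_2 \;\vee\; del(k,l)\notin S_1\sqcup S_2$. I would then dispatch each disjunct using the single structural fact that drives all of these results: the merge operator here is set union, under which record membership is monotone. Concretely, any operation present in $S_1$ or in $S_2$ survives into $S_1\cup S_2$, and any operation absent from $S_1\cup S_2$ must already have been absent from both $S_1$ and $S_2$.

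In the first case, $add(k,l)\in S_1\cup S_2$ forces $add(k,l)\in S_1$ or $add(k,l)\in S_2$ (or both); either membership violates the clause $add(k,l)\notin\cdot$ and so contradicts $I(S_1)=true$ or $I(S_2)=true$. In the second case, $del(k,l)\notin S_1\cup S_2$ forces $del(k,l)\notin S_1$ \emph{and} $del(k,l)\notin S_2$, so both operands violate the clause $del(k,l)\in\cdot$, again contradicting validity. Since both disjuncts lead to a contradiction, the list modifications must be \iconfluent, as claimed.

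I expect the only genuine subtlety to be the De Morgan expansion itself: since the invariant is a conjunction, its failure is a disjunction of two distinct failure modes, and it is easy to argue only the $add$-case and overlook the $del$-case (or to conflate the connective, as is easy to do when transcribing). I would also flag one structural remark: unlike the uniqueness result (Claim~\ref{claim:nunique}), where the common-ancestor hypothesis is precisely what forbids merging independently-valid but jointly-invalid states, here monotonicity of set union alone closes the argument, so the common-ancestor assumption is not actually used and the result is robust to arbitrary $I$-$T$-reachable divergence rather than merely single-transaction divergence.
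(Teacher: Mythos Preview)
Your proposal is correct and follows exactly the paper's template: assume non-\iconfluence, obtain $I$-valid $S_1,S_2$ with invalid merge, unfold the violated invariant, and derive a contradiction from the monotonicity of set-union merge. In fact, your treatment is slightly more careful than the paper's own proof: the paper writes the violation as ``$add(k,l)\in S_1\sqcup S_2$ \emph{and} $del(k,l)\notin S_1\sqcup S_2$'' and argues both simultaneously, whereas you correctly observe that negating the conjunctive invariant yields a \emph{disjunction} and dispatch the two cases separately---your De~Morgan remark is exactly the spot where the paper's phrasing is loose.
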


\begin{claim}[Arbitrary modifications to a list ADT are not \iconfluent with respect to equality constraints on the size of the list]
Consider the following transactions:
\begin{align*}
T_{1l}&\coloneqq del(x_i, l);~add(x_a, l);~commit\\
T_{2l}&\coloneqq del(x_i, l);~add(x_b, l);~commit
\end{align*}
and the list size invariant:
$$I_l(D) = \{size(l) = 1\}$$
Now, the size invariant holds on a list of size one ($I_u(\{add(x_i, l)\}) \rightarrow true$) and on independent state modifications:
\begin{align*}
T_{1l}(\{add(x_i, l)\})&=\{add(x_i, l),~del(x_i, l),~add(x_a, l)\}\\
T_{2l}(\{add(x_i, l)\})&=\{add(x_i, l),~del(x_i, l),~add(x_b,1)\}
\end{align*}
However, merging these states result in an invalid state:
\begin{align*}
I_l(&\{add(x_i, l),~del(x_i, l),~add(x_a, l)\}\\\sqcup~&\{add(x_i, l),~del(x_i, l),~add(x_b, l)\})\rightarrow false
\end{align*}
Therefore, $\{T_{1l}, T_{2l}\}$ is not \iconfluent under $I_u$.
\end{claim}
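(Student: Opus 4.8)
The plan is to establish this negative result by counterexample, following the second proof template laid out at the start of Appendix C: to refute \iconfluence it suffices to exhibit two $I$-$T$-reachable states sharing a common ancestor whose merge violates the invariant. Concretely, I would instantiate the ``equality on size'' invariant as $I_l(D) = \{size(l) = 1\}$ and take as the common ancestor the singleton list $S_s = \{add(x_i, l)\}$, which trivially satisfies $I_l$ since $size$ evaluates to $1-0=1$.

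The core of the construction is to design two transactions that each preserve the size-one invariant in isolation but whose effects compound under set union. The natural choice is to have each transaction replace the sole existing element by a fresh, distinct one: let $T_{1l}$ perform $del(x_i,l)$ followed by $add(x_a,l)$, and let $T_{2l}$ perform $del(x_i,l)$ followed by $add(x_b,l)$, with $x_a$ and $x_b$ distinct values. Applying the $size$ definition, each resulting state carries two $add$ records and one $del$ record, so its size is $2-1=1$; hence both states are $I_l$-valid and, being reachable by list modifications from $S_s$, both are $I$-$T$-reachable.

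The decisive step is the merge. Under set union the merged database accumulates $add(x_i,l)$, $add(x_a,l)$, and $add(x_b,l)$ together with the single $del(x_i,l)$ record (the two identical deletions coalesce into one). The $size$ formula then yields $3-1=2 \neq 1$, so $I_l$ fails on the merged state while holding on each of the two inputs. This contradicts the definition of \iconfluence, establishing that arbitrary list modifications are not \iconfluent with respect to size-equality constraints.

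I expect the only real subtlety to be the bookkeeping of the ADT merge semantics, and this is where I would be most careful. The trick that makes the counterexample work is an asymmetry: although each transaction has net size change zero, the two deletions target the \emph{same} original element (so they cancel only once under the set-union merge) while the two insertions are of \emph{distinct} elements (so both survive), leaving a net surplus of exactly one element. I would re-verify against the stated definition of $size$---additions counted with multiplicity, minus deletions---to confirm that this overlap-of-deletions versus disjointness-of-insertions is precisely what produces the size-two merge, rather than any accidental double counting.
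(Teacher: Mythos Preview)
Your proposal is correct and follows essentially the same approach as the paper: the same ancestor state $\{add(x_i,l)\}$, the same two replace-one-element transactions $T_{1l}$ and $T_{2l}$, and the same merge computation showing the resulting list has size $2$. Your explicit arithmetic ($3-1=2$) and your articulation of why the deletions coalesce while the insertions do not are a welcome clarification, but the argument is identical.
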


Note that, in our above list ADT, modifying the list is \iconfluent with respect to constraints on the head and tail of the list but not intermediate elements of the list! That is, the head (resp. tail) of the merged list will be the head (tail) of one of the un-merged lists. However, the second element may come from either of the two lists.

\fi

\end{document}